\documentclass[a4paper,UKenglish,cleveref, autoref, numberwithinsect, thm-restate]{lipics-v2021}

\usepackage{amsmath}
\usepackage{amsthm}
\usepackage{proof}
\usepackage{graphicx}
\usepackage{xspace}
\usepackage{color} 
\usepackage[all]{xy}
\usepackage{tikz-cd}
\usepackage{hyperref}
\usepackage{mathtools}
\usepackage{macros}
\usepackage{MnSymbol}
\usepackage[shortlabels]{enumitem}
\usepackage{booktabs}
\usepackage{relsize} 
\usepackage{float}

\bibliographystyle{plainurl}

\usepackage{todonotes}

\title{Function spaces for orbit-finite sets} 

\author{Miko{\l}aj Boja\'nczyk}{University of Warsaw}{bojan@mimuw.edu.pl}{https://orcid.org/0000-0002-1825-0097}{}

\author{Lê Thành Dũng (Tito) Nguy\~{\^{e}}n}{École normale supérieure de Lyon, France \and \url{https://nguyentito.eu/}}{nltd@nguyentito.eu}{https://orcid.org/0000-0002-6900-5577}{Supported by the LABEX MILYON (ANR-10-LABX-0070) of Université de Lyon, within the program ``France 2030'' (ANR-11-IDEX-0007) operated by the French National Research Agency (ANR).}

\author{Rafa{\l} Stefa\'nski}{University of Warsaw}{rafal.stefanski@mimuw.edu.pl}{https://orcid.org/0000-0002-1825-0097}{}

\authorrunning{M. Boja\'nczyk, L. T. D. Nguy{\^{e}}n and R. Stefa\'nski } 

\Copyright{Miko{\l}aj Boja\'nczyk, L{\^{e}} Th{\`{a}}nh Dung Nguy{\^{e}}n and Rafa{\l} Stefa\'nski } 

\ccsdesc[500]{Theory of computation~Automata}

\keywords{Orbit-finite sets, automata, linear types, game semantics} 

\category{} 

\relatedversion{} 



\acknowledgements{L.~T.~D.\ Nguy\~{\^{e}}n would like to thank Clovis Eberhart and Cécilia Pradic for their ongoing collaboration on ``implicit automata for data words'' (cf.~\cite[\S1.4.4]{titoPhD}), which inspired some ideas in this paper.}

\nolinenumbers 
\hideLIPIcs

\EventEditors{John Q. Open and Joan R. Access}
\EventNoEds{2}
\EventLongTitle{42nd Conference on Very Important Topics (CVIT 2016)}
\EventShortTitle{CVIT 2016}
\EventAcronym{CVIT}
\EventYear{2016}
\EventDate{December 24--27, 2016}
\EventLocation{Little Whinging, United Kingdom}
\EventLogo{}
\SeriesVolume{42}
\ArticleNo{23}

\begin{document}

\newcommand{\loli}{\multimap}

\maketitle 
\begin{abstract}
    Orbit-finite sets are a generalisation of finite sets, and as such support many operations allowed for finite sets, such as pairing, quotienting, or taking subsets. However, they do not support function spaces, i.e.~if $X$ and $Y$ are orbit-finite sets, then the space of finitely supported functions from $X$ to $Y$ is not orbit-finite. In this paper we propose two solutions to this problem: one is obtained by generalising the notion of orbit-finite set, and the other one is obtained by restricting it. In both cases, function spaces and the original closure properties are retained. Curiously, both solutions are ``linear'': the generalisation is based on linear algebra, while the restriction is based on linear logic.
\end{abstract}

\section{Introduction}
The class of orbit-finite sets is a class of sets that contains all finite sets and some infinite sets, but still shares some properties with the class of finite sets.  The idea, which dates back to Fraenkel--Mostowski models of set theory,  is to begin with an infinite set $\atoms$ of \emph{atoms} or \emph{urelements}. We think of the atoms as being names, such as Eve or John, and atoms can only be compared with respect to equality. Intuitively speaking, an  orbit-finite set is a set  that can be constructed using the atoms, such as $\atoms^2$ or $\atoms^*$, subject to the constraint that there are finitely many elements up to  renaming atoms. For example, $\atoms^2$ is orbit-finite because it has two elements up to renaming atoms, namely (John, John) and (John, Eve), while $\atoms^*$ is not orbit-finite, because the length of a sequence is invariant under renaming atoms, and there are infinitely many possible lengths. For a survey on orbit-finite sets, see~\cite{bojanczyk_slightly2018}.

The notion of orbit-finiteness can be seen as an attempt to find an appropriate notion of finiteness for the  nominal sets of Gabbay and Pitts~\cite{PittsAM:nomsns}.  This attempt emerged from the study of computational models such as monoids~\cite{bojanczykNominalMonoids2013} and automata~\cite{bojanczykAutomataTheoryNominal2014} over infinite alphabets. Since automata are also the main use case for the present paper, we illustrate orbit-finiteness using an automaton example. 

\begin{example}[An orbit-finite automaton]\label{ex:first-letter-repeats}
    Let  $L \subseteq \atoms^*$ be the language of  words in which the letter from the first position does not appear again. This language contains John $\cdot$ Mark $\cdot$ Mark $\cdot$ Eve, because John does not reappear, but it does not contain John $\cdot$ Mark $\cdot$ John. To recognize this language, we can use a deterministic automaton, which uses its state to remember the first letter. In this automaton, the input alphabet is $\Sigma = \atoms$ and  the state space is $Q = 1 + 1 + \atoms$. In this state space, there are two special states, namely the initial state and a rejecting error state, and furthermore there is one state for each atom $a \in \atoms$, which represents a situation where the first letter was $a$ but it has not been seen again yet. This state space is infinite; but it is orbit-finite, since each of the three components in $Q$ represents a single orbit, with the orbits of the $1$ components being singletons.\exampleend
\end{example}

Orbit-finite sets have many advantages, which ensure that they are a good setting for automata theory, and discrete mathematics in general. For example, an orbit-finite set can be represented in a finite way~\cite{bojanczyk_slightly2018}, which ensures that it becomes meaningful to talk about algorithms that input orbit-finite sets, such as an emptiness check for an automaton. Also, orbit-finite sets are closed under taking disjoint unions and products, which ensures that natural automata constructions, such as the union of two nondeterministic automata or the product of two deterministic automata can be performed.

However, orbit-finite sets do not have all the closure properties of finite sets. Notably missing is the powerset construction, and more generally taking function spaces. For example, if we look at the powerset of $\atoms$, then this powerset will not be orbit-finite, since already the finite subsets give infinitely many orbits (two finite subsets of different sizes will be in different orbits). The lack of powersets means that one cannot do the subset construction from automata theory, and in particular deterministic and nondeterministic automata are not equivalent. This non-equivalence was known from the early days of automata for infinite alphabets~\cite{kaminskiFiniteMemoryAutomata1994}, and in fact, some decision problems, such as equivalence, are decidable for deterministic automata but undecidable for nondeterministic automata~\cite{nevenFiniteStateMachines2004}. Another construction that fails is converting a deterministic automaton into a monoid~\cite[p.~221]{bojanczykNominalMonoids2013}; this is because function spaces on orbit-finite sets are no longer orbit-finite, as explained in the following example. 

\begin{example}[Failure of the monoid construction]\label{ex:first-letter-repeats-monoid}
    Let us show that the  automaton from Example~\ref{ex:first-letter-repeats} cannot be converted into a monoid. The standard construction would be to define the monoid as the subset $M \subseteq Q \to Q$ of all state transformations, namely the subset generated by individual input letters. 
    Unfortunately, this construction does not work. This is because in order  for  two input words to give the same state transformation, they need to have the same set of letters that appear in them. In particular, the corresponding set of set transformations is not orbit-finite, for the same reason as why the finite powerset is not orbit-finite. Not only does the standard construction not work, but also this language is not recognized by any orbit-finite monoid.\exampleend
\end{example}

An attempt to address this problem was provided in~\cite{stefansk-msc,stefanski-phd,bojanczykstefanski2020}, by using \emph{single-use} functions. The idea, which originates in linear types and linear logic,  is to restrict the functions so that they use each argument at most once. For example, consider the following two functions that input atoms and output Booleans:
\begin{align*}
a \in \atoms \mapsto 
\begin{cases}
    \text{true} & \text{if $a=$ John}\\
    \text{false} & \text{otherwise}
\end{cases}
\qquad \qquad 
a \in \atoms \mapsto 
\begin{cases}
    \text{true} & \text{if $a=$ John or $a = $Eve}\\
    \text{false} & \text{otherwise}
\end{cases}
\end{align*}
The first function is single-use, since it compares the input atom to John only, while the second function is not single-use, since it requires two comparisons, with John and Eve. Here is another example, which shows that the problems with the monoid construction from Example~\ref{ex:first-letter-repeats-monoid} could be blamed on a violation of the single-use condition.
\begin{example}
    Consider  the transition function of the automaton in Example~\ref{ex:first-letter-repeats}, which inputs a state in $1 + 1 + \atoms$ together with an input letter from $\atoms$, and returns a new state.  This function is not single use. Indeed, if the state is in $\atoms$, then the transition function compares it for equality with the input letter; but if the comparison returns true, another copy of the old state must be kept as the new state for future comparisons. \exampleend
\end{example}

If one restricts attention to functions that are single-use, much of the usual robustness of automata theory is recovered, with deterministic automata being equivalent to monoids, and both being equivalent to two-way deterministic automata~\cite{bojanczykstefanski2020}.

Despite the success of the single-use restriction in solving automata problems, one would ideally prefer a more principled approach, in which instead of defining single-use automata, we would define a more general object, namely single-use sets and functions. Then the definitions of  automata and monoids, as well theorems speaking about them, should arise automatically as a result of suitable closure properties of the sets and functions.

This approach was pursued in~\cite{stefanski-phd}, in which a \emph{category} of orbit-finite sets with single-use functions was proposed. In this corresponding category,  one can represent the set of all single-use functions between two orbit-finite sets $X$ and $Y$ as a new set, call it $X \Rightarrow Y$, which is  also orbit-finite.  However, as we will see later in this paper, this proposal is not entirely satisfactory, since it fails to account for standard operations that one would like to perform on function spaces, most importantly partial application (currying). Note that partial application is crucial for converting an automaton into a monoid, since the monoid consists of partially applied transition functions, in which the input word is known, but the input state is not. In the language of category theory, the proposal from~\cite{stefanski-phd} failed to be a monoidal closed category.

Let us mention there have been several works using category theory to generalize classical operations on automata, such as the coalgebraic ``generalized powerset construction''~\cite{DBLP:journals/corr/abs-1302-1046}. The closest to the philosophy we exposed above might be the setting introduced by Colcombet and Petrişan~\cite{colcombet2020automata}, where automata in different categories are compared (see~\cite{ColcombetPS21,Aristote24} for applications). Within this setting, Nguy{\~{ê}}n and Pradic have studied some properties of automata over monoidal closed categories~\cite[Sections~1.2.3~and~4.7--4.8]{titoPhD} as part of their research on ``implicit automata''\footnote{\label{ftn:iatlc}This line of work is about relating the expressive power of automata and typed $\lambda$-calculi. In~\cite{IATLC2,titoPhD}, in addition to results on arbitrary symmetric monoidal closed categories, such a category of single-use assignments on string-valued registers is built and used to relate a register-based string transducer model to a $\lambda$-calculus with linear types. Indeed, symmetric monoidal closed categories are famous for providing denotational semantics for the linear $\lambda$-calculus. Similarly, our results here could serve to characterize the languages of words with atoms from~\cite{bojanczykstefanski2020} via some typed $\lambda$-calculus.}~\cite{IATLC,IATLC2,titoPhD,pradic2024implicit}.

 \subparagraph{Contributions of this paper.}
 The main contribution of this paper is to propose a notion of single-use sets and functions, which extends the proposal from~\cite{stefanski-phd}, but which is rich enough to be closed under taking function spaces. More formally, we propose a category of single-use functions between orbit-finite sets equipped with additional metadata, and we prove that this category is symmetric monoidal closed.

The main idea is to follow the tradition of linear types, and to  distinguish two kinds of products, namely\footnote{Categorically, $\&$ is the cartesian product while $\otimes$ is a monoidal product.} $X \otimes Y$ and $X \& Y$. Thanks to this distinction, the function space can be built so that the appropriate operations on functions, namely application and currying, can be implemented in a single-use way.

Our proposed category is strongly inspired by linear types, and the proof that it is symmetric monoidal closed uses a form of \emph{game semantics}\footnote{In this paper, the game semantics will only appear in the appendix, as they are part of our proofs rather than our main claims. That said, let us point to the lecture notes~\cite{abramsky2013semantics,Hyland1997} as references for the category of ``simple games'' upon which we build. For a recent survey of modern game semantics, see~\cite{ClairambaultHDR}.\\
  In the work mentioned in \Cref{ftn:iatlc}, the fact that ``register contents do not influence the control flow'' leads to a much simpler construction, similar to Dialectica categories~\cite{PaivaDialectica} (which can be seen as a form of game semantics with very short plays), where no quotient is needed. Our \Cref{thm:single-use-closed} might also be provable by representing single-use functions as $\lambda$-terms (or programs in some richly structured syntactic formalism) instead of strategies over games. Indeed, it is a classical fact that a simple type is inhabited by finitely many linear $\lambda$-terms up to $\beta$-conversion, and variations on this fact have been used in the literature to relate automata and $\lambda$-calculus~\cite{IATLC,LambdaTransducer}.} -- a tool that we take from programming language theory. However, to the best of our knowledge, it is an original idea to have an infinite but orbit-finite base type $\atoms$, and to observe that all constructions in game semantics are consistent with orbit-finiteness. We believe that the resulting category deserves further study, and that it is an interesting and non-trivial example of a category representing ``finite'' objects.

Along the way, we provide examples of how the category can be useful in automata theory. Our main example is converting a deterministic automaton into a monoid, but another example is modelling two-way deterministic automata, and deciding their emptiness. 

An important property of our construction is that it is generic. In fact, instead of having a set that is equipped with equality only, one could apply the construction to any relational structure, e.g.~real arithmetic $(\mathbb R, +, \times, <)$, as long as the structure is given using relations and not functions. Under further assumptions on the structure, such as having a decidable first-order theory (which holds for real arithmetic) and being oligomorphic (which does not hold for real arithmetic but does hold for the rational numbers with their linear order), further benefits in the resulting single-use category can be derived, such as having an emptiness algorithm for deterministic two-way automata. 

Finally, as a minor contribution, we present an alternative solution for the problem of function spaces, which is  to use vector spaces of orbit-finite dimension. This is a minor contribution as far as the present paper is concerned, because the technical tools were developed already in~\cite{bojanczykKM21OrbitFiniteVector}, and the only contribution -- if any -- of this paper is one of perspective, namely framing it as a symmetric monoidal closed category. An advantage of the vector space category is its simplicity, and the fact that it is ``bigger'' in the following sense. The two solutions for function spaces discussed in this paper, namely the single-use solution and the vector space solution, sit on both sides of the classical category of orbit-finite sets, as witnessed by two faithful functors, one from the single-use category to the orbit-finite category, and one from the orbit-finite category to the vector space category. 
The generality of vector spaces comes at a price, though. As mentioned before, the single-use construction can be applied to any structure, and the orbit-finite benefits can be derived for all $\omega$-categorical structures (this is the standard assumption in the study of orbit-finiteness). In contrast, the vector space construction is more brittle, and it fails for certain $\omega$-categorical  structures such as the Rado graph.   Another disadvantage of the vector space category is that it is not traced with respect to the coproduct, unlike the single-use category, which precludes the applications to two-way automata.

\section{Sets with atoms}
We begin  with  a brief introduction to orbit-finite sets. For a more detailed treatment, see~\cite{bojanczyk_slightly2018}.

Fix for the rest of this paper a countably infinite set $\atoms$, whose elements will be called atoms.  We assume that this set has no other structure except for equality, which will mean that we will only be interested in notions which are \emph{equivariant}, i.e.~invariant under renaming atoms. For example, $\atoms$ has only two equivariant subsets, namely the empty and full subsets. On the other hand, the set $\atoms^2$ has four equivariant subsets; this is because any subset containing (Eve, Eve) must contain all pairs on the diagonal, and any subset containing (Eve, John) must contain all pairs outside the diagonal.  In order to meaningfully speak about equivariant subsets, we must be able to have an action of atom renamings on the set, as formalized in the following definition. The finite support condition is a technical condition that ensures that the action is well-behaved; this condition dates back to the work of Fraenkel and Mostowski, and is explained in the survey texts~\cite{PittsAM:nomsns,bojanczyk_slightly2018}.



\begin{definition}[Set with atoms]
    A \emph{set with atoms} is a set $X$, equipped with an action of the group of atom renamings, subject to the following \emph{finite support condition}: for every $x \in X$ there is a finite set of atoms, such that if an atom renaming $\pi$ fixes all atoms in the set, then it also fixes $x$.
\end{definition}

The idea is that a set with atoms is any kind of object that deals with atoms, such as the set $\atoms^*$ of all words over the alphabet $\atoms$, or the family of finite subsets of $\atoms$. Among such objects, we will be interested in those which are ``finite''. This will be formalized by  saying that there are  finitely many orbits, as described below.
Define the \emph{orbit} of an element $x$ in a set with atoms to be the elements that can be obtained from $x$ by applying some atom renaming. For example, in the set $\atoms^2$,  the orbit of (John, Eve)   contains  (Mark, John), but it does not contain (John, John). The orbits form a partition of a set with atoms.

\begin{definition}[Orbit-finite set]   A set with atoms is called \emph{orbit-finite} if it has finitely many orbits. 
\end{definition}

A typical example of an orbit-finite set is $\atoms^4$, or more generally any polynomial expression such as $\atoms^4 + \atoms^3 + \atoms^3 + 1$. Here, $1$ represents the set of zero-length sequences; this set has a unique element which is its own orbit.  For  example,  $\atoms^3$ has five orbits, because there are five possible ways of choosing a pattern of equalities in a sequence of three names. On the other hand,  $\atoms^*$ has infinitely many orbits, since sequences of different lengths are necessarily in different orbits.   The family of finite subsets of $\atoms$  is also  not orbit-finite, because subsets of different sizes are in different orbits. The full powerset $\powerset \atoms$ is not even a legitimate object in our setting, because some of its elements, i.e.~some subsets of $\atoms$, violate the finite support condition.

        

\subsection{Finiteness of function spaces}
\label{sec:orbit-finite-function-spaces}
As mentioned above, orbit-finite sets can be seen as  a certain generalization of finite sets. They allow some, but not all, operations that can usually be done on finite sets. For example, orbit-finite sets are closed under disjoint unions $X + Y$ and products $X \times Y$.  Another good property is that an orbit-finite set has only finitely many equivariant subsets (an equivariant subset is one that is invariant under the action of atom permutations). This is because an equivariant subset is a union of some of the finitely many orbits. This accounts for some of the good computational properties of orbit-finite sets. For example, nonemptiness is decidable for orbit-finite automata (see below), because the state space can be searched orbit by orbit.
However, orbit-finite sets do not have orbit-finite function spaces, as explained in the following example. 

\begin{example}
    Consider the space of functions of type $\atoms \to \atoms$. What are the legitimate functions? One choice is that we only allow the equivariant functions. Under this choice, there is only one possible function,  namely the identity function. However, as we will explain below, the more appropriate choice is the class of finitely supported functions, i.e.~those that are invariant under all atom renamings that fix some finite set of atoms that depends only on the function. For example,  the  function
    \begin{align*}
    f(a) = \begin{cases}
        \text{Mark} & \text{if $a \in \set{\text{John, Eve, Bill}}$} \\
        a & \text{otherwise}
    \end{cases}
    \end{align*}
    is finitely supported, because it is invariant under all atom renamings that fix Mark, John, Eve and Bill. The space of finitely supported functions will be called the \emph{finitely supported function space}. This space is equipped with an action of atom renamings. For example, if $\pi$ is the atom renaming that swaps Mark with Adam, then applying it to the function $f$ defined above gives the function $\pi(f)$ that has the same definition (or source code, if a programming intuition is to be followed), except that Mark is used instead of Adam.

    In the case of $\atoms \to \atoms$, the finitely supported function space is not orbit-finite. Indeed, the condition $a \in \set{\text{John, Eve, Bill}}$ can be replaced by $a \in X$ for any finite set $X \subset \atoms$ of exceptional values, and two choices of $X$ of different cardinalities will give us two functions in different orbits. \exampleend
\end{example}

In the above example, we explained how the finitely supported function space is not always orbit-finite. On the other hand, the equivariant function space will always be literally finite, if the input and output types are orbit-finite, because it will be an equivariant subset of the product $X \times Y$. So why do we insist on the finitely supported function space? A more principled reason will appear later in the paper, where we discuss symmetric monoidal closed categories, and where the finitely supported function space will turn out to be the right one. However, we can already give a simple reason, which appeals to automata theory, namely converting an automaton into a monoid. As we have seen in Example~\ref{ex:first-letter-repeats-monoid}, when converting an automaton to a monoid, we will want to use partially applied transition functions, and such functions will be finitely supported but not equivariant.

The lack of function spaces is the problem  addressed in this paper. We present two solutions of this problem. In each of the two solutions, we modify the notion of orbit-finite sets, by either restricting it or generalizing it, in a way that recovers function spaces.

\section{Single-use sets and functions}
\label{sec:single-use-sets}
We now turn to our first solution for the problem of orbit-finite function spaces.  Our solution builds on the idea from~\cite[Section 2.2]{stefanski-phd}, which is to consider only functions that are single-use. We describe this idea in Section~\ref{sec:single-use-functions-over-polynomial-orbit-finite-sets}, and we show how it almost, but not quite, achieves function spaces. Then, in the rest of this section, we show how function spaces can be recovered by using a more refined type system. 

\subsection{Single-use functions over polynomial orbit-finite sets}
\label{sec:single-use-functions-over-polynomial-orbit-finite-sets}

In the introduction, we have already given an intuitive description behind the single-use functions; these are functions that destroy the argument after any use of it, such as comparing it to a constant. We now give a more formal definition.

We do not define the single-use functions on all orbit-finite sets, but only a syntactically defined fragment, namely the \emph{polynomial orbit-finite sets}, which are sets that can be generated from $1$ and $\atoms$ using  
 products $\times$ and disjoint unions $+$. Therefore, we will allow orbit-finite sets like $1 + \atoms^2$, but we will not allow orbit-finite sets like 
the set of non-repeating pairs
$\setbuild{(a,b)}{$a \neq b \in \atoms$}$ or the set of unordered pairs $\setbuild{\set{a,b}}{$a \neq b \in \atoms$}$. It is an open problem to find a satisfactory definition of single-use functions on all orbit-finite sets. (A simple hack is to use a quotienting construction, similar to \Cref{sec:quotient-category}, but what we would really like to do is to identify some extra structure in a set, possibly an action of some yet unknown group or semigroup, which enables us to speak about single-use functions.)

Consider two polynomial orbit-finite sets $X$ and $Y$. To define which functions $X \to Y$ are single-use, we use an inductive definition. We begin with certain functions that are considered single-use, such as the equality test of type $\atoms \times \atoms \to 1 + 1$. These functions are called the \emph{prime functions}, and their full list is given in Figure~\ref{fig:prime-morphisms-without-with}. Next, we combine the prime functions into new ones using three combinators. The first, and most important, combinator is  function composition. Then, we have two combinators for the two type constructors: if two functions $f_1 : X_1 \to Y_1$ and $f_2 : X_2 \to Y_2$ are single-use, then the same is true for:
\begin{align*}
    f_1 \times f_2 : X_1 \times X_2 \to Y_1 \times Y_2
    \qquad &
    f_1 + f_2 : X_1 + X_2 \to Y_1 + Y_2 \\
    \scriptstyle  (x_1,x_2) \mapsto (f_1(x_1),f_2(x_2)) 
    \qquad &
    {\scriptstyle \text{left}(x_1) \mapsto \text{left}(f_1(x_1)) 
        \quad 
        \text{right}(x_2) \mapsto \text{right}(f_2(x_2)) }.
\end{align*}

Crucially, the list of prime single-use functions does not contain the copying function $a \in \atoms \mapsto (a,a) \in \atoms^2$. Therefore, an alternative name for the single-use functions is \emph{copyless}. If we added copying, then we would get all finitely supported functions~\cite[Lemma 23]{stefanski-phd}.

\begin{table}[h!]
    \centering
    \begin{tabular}{lll}
        \textbf{Function} & \textbf{Type} & \textbf{Definition} \\ \\
        \emph{Functions about $\atoms$} \\
        equality test & $\atoms \times \atoms \to 1 + 1$ & $a, b \mapsto \text{if } a = b \text{ then true else false}$ \\
        constant $a$ & $1 \to \atoms$ & $x \mapsto a$ \\
        identity & $\atoms \to \atoms$ & $x \mapsto x$ \\
        \\
        \emph{Functions about \(\times\)} \\
        commutativity of $\times$ & $X \times Y \to Y \times X$ & $x \times y \mapsto y \times x$ \\
        first projection & $X \times Y \to X$ & $x \times y \mapsto x$ \\
        second projection & $X \times Y \to Y$ & $x \times y \mapsto y$ \\
        append 1 & $X \to X \times 1$ & $x \mapsto x \times ()$ \\
        associativity of $\times$ & $(X \times Y) \times Z \to X \times (Y \times Z)$ & $(x \times y) \times z \mapsto x \times (y \times z)$ \\ \\
        \emph{Functions about \(+\)} \\
        first co-projection & $X \to X + Y$ & $x \mapsto \text{left}(x)$ \\
        second co-projection & $Y \to X + Y$ & $y \mapsto \text{right}(y)$ \\
        co-diagonal & $X + X \to X$ & $\left\{\begin{tabular}{l}
            $\text{left}(x) \mapsto x$\\
            $\text{right}(x) \mapsto x$
            \end{tabular}\right.$ \\
        commutativity of $+$ & $X + Y \to Y + X$ & $\left\{\begin{tabular}{l}
        $\textrm{left}(x) \mapsto \textrm{right}(x)$\\
        $\textrm{right}(y) \mapsto \textrm{left}(y)$
        \end{tabular}\right.$ \\
        associativity of $+$ & $(X + Y) + Z \to X + (Y + Z)$ & $\left\{
        \begin{tabular}{l}
        $\text{left}(\text{left}(x)) \mapsto \text{left}(x)$\\
        $\text{left}(\text{right}(y)) \mapsto \text{right}(\text{left}(y))$\\
        $\text{right}(z)\mapsto \text{right}(\text{right}(z))$
        \end{tabular}\right.$ \\
        \\
        \emph{Distributivity}
        \\
        $+$ distributes over $\times$ & $X \times (Y + Z) \to (X \times Y) + (X \times Z)$ & $\left\{\begin{tabular}{l}
            $x \times (\text{left}(y)) \mapsto \text{left}(x \times y)$\\
            $x \times (\text{right}(z)) \mapsto \text{right}(x \times z)$
        \end{tabular}\right.$ \\
        \\
    \end{tabular}
    \caption{The prime single-use functions for polynomial orbit-finite sets $X, Y$ and $Z$.}
    \label{fig:prime-morphisms-without-with}
\end{table}

\begin{example}\label{ex:six-compositions}
    Consider function of type $\atoms^3 \to \atoms$ which inputs a triple $(a,b,c)$ of atoms and returns $a$ if $c$ is equal to Mark, and $b$ otherwise. This function is a single-use function. It is obtained by composing the six functions listed below:
\begin{center}
    \begin{tabular}{ll}
        Function & Type after function \\
        \hline
        Append 1. & $ \atoms \times \atoms \times \atoms \times 1$ \\
        Replace added $1$ with Mark using the constant function. & $\atoms \times \atoms \times \atoms \times \atoms$ \\
        Apply the equality test to the last two components. & $ \atoms \times \atoms \times (1+1)$ \\
        Distribute. & $ \atoms \times \atoms \times 1 +   \atoms \times \atoms \times 1$ \\
        Project to first and second components, respectively. & $\atoms + \atoms$ \\
        Co-diagonal & $\atoms$ 
    \end{tabular}
\end{center}
To justify this description, one should also show that the six functions are single-use. Three of the functions, namely append 1, distributivity and co-diagonal are prime functions. The other three are obtained by combining prime functions using the combinators. For example, the equality test is paired, using the combinator for $\times$, with the identity on the remaining two atoms. \exampleend
\end{example}

The design goal of the single-use restriction is to have orbit-finite function spaces. The rough idea is that a single-use function can only use a bounded number of atoms in its source code, which guarantees orbit-finiteness of the function space. 

\begin{example}\label{ex:first-single-use-function-space}
    Consider function of type $\atoms \to 1 + 1$, which can be seen as subsets of the atoms, with $1+1$ representing the Booleans. We will consider two function spaces: the larger space of all finitely supported functions, and the smaller space of single-use functions.
    
    A function in the larger space is any finitely supported subset of the atoms; such subsets are the same as the finite and co-finite subsets. Therefore, the larger function space admits an equivariant bijection with a disjoint union of two copies of the finite powerset $\powerset_{\text{fin}} \atoms$, in particular it is not orbit-finite.
    
    Consider now the smaller single-use function space. There are four possible functions of this kind: (a) always return true; (b) always return false; (c) check for equality with some fixed atom $a$; (d) check for disequality ($\neq$) with some fixed atom $a$.  Therefore,  the set of single-use functions  admits an equivariant bijection with the orbit-finite set $1 + 1 + \atoms + \atoms$.
    \exampleend
\end{example}

The above example shows that the space of single-use functions of some type $X \to Y$ is orbit-finite, and in fact it can be described using a polynomial orbit-finite set. This is true for every choice of polynomial orbit-finite sets $X$ and $Y$, as proved in~\cite[Theorem 5]{stefanski-phd}, and illustrated in the following example. 

\begin{example}\label{ex:decision-tree-types} Assume that the input type $X$ is some power of the atoms $\atoms^k$, and the output type $Y$ does not use atoms, e.g.~it is~$Y = 1 +1$. The assumption on the input type can be made without loss of generality using distributivity, while the assumption on the output type is a proper restriction, but it will allow us to skip some technical details of the general construction while retaining the important intuitions. 
We describe below a type that represents all single-use functions from $\atoms^k$ to $Y$; we shall denote it by $\atoms^k \Rightarrow Y$. Note that $\Rightarrow$ is \emph{not} a primitive type constructor in our grammar of types; it is a notation that stands for the inductive construction below.

This type is defined by  induction on $k$. In the base case of $k=0$ we simply need to give a value from the output type, and therefore $\atoms^0 \Rightarrow Y$ is the same as $Y$. Consider now the induction step of $k > 0$.  
    We observe that a single-use function that inputs $\atoms^k$ must begin with some equality test, and then continue with one of two single-use functions that have fewer arguments (one for the case when the equality test returns true, and one for the other case). This observation leads to the following definition of the type $\atoms^k \Rightarrow Y$:
\begin{align*}
\myunderbrace{ \coprod_{i \in \set{1,\ldots,k}} 
    \atoms \times (\atoms^{k-1} \Rightarrow Y) \times (\atoms^{k-1} \Rightarrow Y)
 }{starts by comparing $i$-th  \\
 coordinate to some constant}  \quad + \quad 
\myunderbrace{\coprod_{i, j \in \set{1,\ldots,k}} (\atoms^{k-2} \to Y) \times (\atoms^{k-2} \to Y)}{
    starts by comparing the \\ 
    $i$-th and $j$-th coordinates}.
\end{align*}
Note that the above representation of the function space is not necessarily unique, i.e.~the same function can be represented in several different ways. For example, the order in which equality tests are performed will matter for the representation, but might not matter for the function. This is not something that we worry about, and we will use function spaces with non-unique representations in the paper, see also \Cref{sec:quotient-category} for how we deal with non-uniqueness. \exampleend
\end{example}

\subparagraph{Problem with currying.}
Unfortunately, the proposal illustrated in Example~\ref{ex:decision-tree-types} and described in more detail in~\cite{stefanski-phd}  does not give a satisfactory solution to the problem of function spaces. The problem is that the set of representations $X \Rightarrow Y$ should  also support operations on functions. More specifically, we should be able to indicate single-use operations which do the following:
\begin{description}
    \item[evaluation:] a single-use function from $(X \Rightarrow Y) \times X$ to $Y$ which inputs a representation of a function and applies it to an argument;
    \item[composition:] a function from $(X \Rightarrow Y) \times (Y \Rightarrow Z)$ to $(X \Rightarrow Z)$ which inputs the representations of two functions and returns a representation of their composition.
    \item[currying:] for each single-use function from  $ X \times Y $ to $Z$, there should be a single-use function from $X$ to $Y \Rightarrow Z$ which inputs a first argument and returns a representation of the  partially applied function;
\end{description}
Only in the presence of all of these operations can we speak of a function space, and the corresponding category can be called closed. (Composition can be obtained through evaluation and currying, so the essential operations are evaluation and currying.) The following example shows that the currying operation is not single-use, and therefore the space of single-use functions as defined in this section is not closed.

\begin{example}\label{ex:currying-not-single-use}
    Consider  the single-use function
    \begin{align*}
    f : \atoms \times \atoms \to 1 + 1 \qquad (a,b)  
    \mapsto \begin{cases}
        \text{result of test $a = $ Mark} & \text{if } b = \text{Eve} \\
        \text{result of test $a = $ John} & \text{otherwise}.
        \end{cases}
    \end{align*}
The currying of this function, is a new function which  maps a first argument $a \in \atoms$ to the partially applied function $f(a,\_)$. This currying is
\begin{align*}
    a \mapsto \begin{cases}
        b \mapsto b = \text{Eve} & \text{if } a = \text{Mark} \\
        b \mapsto b \neq \text{Eve} & \text{if } a = \text{John} \\
        b \mapsto \text{false} & \text{otherwise}
        \end{cases}
\end{align*}
Recall that in Example~\ref{ex:first-single-use-function-space} we showed that the space of single-use functions of type $\atoms \to 1 + 1$ can be represented as $1 + 1 + \atoms + \atoms$. If we use this representation,  then the currying of the function $f$  is not single-use, because we need to compare the input atom $a$ to two constants, Mark and John. If we use the representation from Example~\ref{ex:decision-tree-types}, then the corresponding type will be $\atoms \otimes (1 + 1)^2$, but the problems with currying will persist. \exampleend
\end{example}

For similar reasons, 
the function space, we proposed above, will also not support function composition, which means that it cannot be used to convert automata into single-use monoids, as we would like to do, since the resulting monoid would need to use function composition as its monoid operation.\footnote{This problem is solved in~\cite{bojanczykstefanski2020} and~\cite{stefanski-phd} in a different way, namely by showing that every orbit-finite monoid necessarily divides a single-use monoid, using a  Krohn-Rhodes construction. However, this construction is difficult and delicate, in particular it does not work for atoms that have more structure than equality alone. In contrast, the proposal that we give in this paper works for other kinds of atoms, as discussed in Section~\ref{sec:beyond-equality}. }

To solve the problems above, we will introduce a more refined type system, which is based on linear types.  The main idea is to pay more attention to type  in Example~\ref{ex:decision-tree-types}. In that definition, we describe a single-use function by specifying the first equality test that it makes, and then giving two descriptions of the functions that will be used in each of the two possible outcomes of the equality test. The main observation is that these two outcomes are mutually exclusive, and therefore we intend to use only one of the two descriptions. For this reason, we will use a  type constructor $\&$ that comes from linear logic. The intended meaning is  that an object of type $X \& Y$ consists of two objects, but with the ability to use only one of them. Since linear logic uses $\otimes$  for the  product that we have so far denoted by $\times$, we will also follow that convention. Using these two kinds of products, the appropriate type for Example~\ref{ex:decision-tree-types} will now become:
\begin{align*}
\coprod_{i \in \set{1,\ldots,k}} 
        \atoms \otimes ((\atoms^{k-1} \Rightarrow Y) \& (\atoms^{k-1} \Rightarrow Y))
          \quad + \quad 
    \coprod_{i, j \in \set{1,\ldots,k}} (\atoms^{k-2} \to Y) \& (\atoms^{k-2} \to Y).
    \end{align*}
Under this definition, the problems from Example~\ref{ex:currying-not-single-use} will be solved, at least for the particular type considered in that example. However, by introducing a new type constructor, we will have to redefine the single-use functions, and then we will have to give a representation of functions that allow this new type constructor, without incurring the need to add any other new type constructors. This is the subject of the next section.

\subsection{Linear types and single-use functions on them}
\label{sec:linear-types-and-single-use-functions}
As mentioned above, to solve the problems with single-use function spaces, we will consider a type system with two kinds of products, as in the following definition.
\begin{definition}[Linear types]\label{def:datatypes}
    A \emph{linear type} is any expression constructed from the atomic types $1$ and $\atoms$ using three\footnote{We set up our type system without using the multiplicative disjunction $\rotatebox[origin=c]{180}{\&}$ of linear logic -- morally, we take our inspiration from intuitionistic linear logic, rather than classical linear logic.} binary type constructors $+, \&$ and $\otimes$.
\end{definition}
In our linear types, it is only the products that are differentiated, while  $+$ comes in only one version. 
    Here is the intuitive explanation of the difference between the two kinds of products, following Girard~\cite[p.2]{girard1995advances}. Having a pair $x \otimes y$ is like having the ability of using both components $x$ and $y$. On the other hand, having a pair $x \& y$ is like having the ability to use one of the two components, at our choice, but not both at once. For example, the input type of the equality test will be $\atoms \otimes \atoms$ not $\atoms \& \atoms$, since the test will need to consume both arguments. This intuition can only go so far; for example, it is not entirely clear what ``our choice'' means. We revisit this intuition in  the appendix, where game semantics will be used to indicate who makes which choices.

We think of each linear type $X$ as representing a set $\sem X$, as defined below:
\begin{align*}
    \sem{1} = 1
\quad 
\sem{\atoms} = \atoms 
\quad 
\sem{X+Y} = \sem X + \sem Y 
\quad 
\sem{X \otimes Y} =
\sem{X \& Y} = \sem X \times \sem Y.
\end{align*}
All sets that arise in this way will be polynomial orbit-finite sets.
Note that the two kinds of product represent the same set, namely the set of pairs in the usual set-theoretic sense. 
However, the two type constructors will be  different, because different functions will be allowed to operate on them. As the expression goes, ``the proof of the pudding is in the eating''; in this case the pudding is the types and the eating is the functions.  

As we did in Section~\ref{sec:single-use-functions-over-polynomial-orbit-finite-sets}, the single-use functions will be defined in terms of prime functions and combinators. The combinators are the same, except that instead of $f_1 \times f_2$ we now have two ways of pairing functions, using $\otimes$ and $\&$. The prime functions are inherited from the previous system, with $\times$ understood as $\otimes$, together with a few new functions for $\&$, as described in Table~\ref{fig:prime-morphisms-with-with}. This is summarized in the following definition.

\begin{definition}[Single-use functions] The class of single-use functions is the least class of functions with the following properties:
    \begin{enumerate}
        \item It contains the functions from Tables~\ref{fig:prime-morphisms-without-with} and~\ref{fig:prime-morphisms-with-with}, with $\times$ in Table~\ref{fig:prime-morphisms-without-with} understood as $\otimes$;
        \item It is closed under composition, as well as under combining functions using  $+$, $\otimes$ and $\&$. 
    \end{enumerate}
\end{definition}

\begin{table}[h!]
    \centering
    \begin{tabular}{lll}
        \textbf{Function} & \textbf{Type} & \textbf{Definition} \\ \\
        diagonal  & $X \to X \& X$ & $x \mapsto x \& x$ \\
        first projection & $X \& Y \to X$ & $x \& y \mapsto x$ \\
        second projection & $X \& Y \to Y$ & $x \& y \mapsto y$ \\
        $\&$ distributes over $\otimes$ & $X \otimes (Y \& Z) \to (X \otimes Y) \& (X \otimes Z)$ & $x \otimes (y \& z) \mapsto (x \otimes y) \& (x \otimes z)$ \\
        $\&$ distributes over $+$ & $X + (Y \& Z) \to (X \& Y) + (X \& Z)$ & $\left\{\begin{tabular}{l}
        $x \& \text{left}(y) \mapsto \text{left}(x \& y)$\\
        $x \& \text{right}(z) \mapsto \text{right}(x \& z)$
        \end{tabular}\right.$ \\ \\ 
    \end{tabular}
    \caption{Prime single-use functions that involve $\&$.}
    \label{fig:prime-morphisms-with-with}
\end{table}

Formally speaking, a single-use function consists of an input linear type $X$, an output linear type $Y$, and a function between the sets $\sem X$ and $\sem Y$ that is generated using the prime functions and combinators from the above definition.  As was the case in Section~\ref{sec:single-use-functions-over-polynomial-orbit-finite-sets}, all single-use functions are  finitely supported.  Therefore, one can think of the single-use functions of type $X \to Y$ as being a subset of the set of all finitely supported functions from $\sem X$ to $\sem Y$. This subset is strict: as we will see, the space of single-use functions will be orbit-finite, unlike the space of all finitely supported functions. We will be thinking of the single-use functions as a category.

\begin{definition}[Category of single-use sets]\label{def:single-use-category}
    The category of single-use sets is:
    \begin{enumerate}
        \item The objects are linear types, as per Definition~\ref{def:datatypes}.
        \item Morphisms between types $X$ and $Y$ are single-use functions from $\sem X$ to $\sem Y$.
    \end{enumerate}
\end{definition}

In the very definition of the above category, there is a faithful functor to the category of  orbit-finite sets with finitely supported functions. This functor maps objects $X$ to their underlying sets $\sem X$, which are orbit-finite sets, and it maps morphisms to the corresponding functions. The functions seen to be finitely supported, and the functor is faithful  by definition, since the morphisms in Definition~\ref{def:single-use-category} are defined to be single-use functions.


The main technical result of this paper is that the category of single-use sets has function spaces, as stated in the following theorem.  The appropriate product will be $\otimes$, and not $\&$. Since the Cartesian product in our category is $\&$ and not $\otimes$, this means that the result we are targeting is symmetric monoidal closed with respect to $\otimes$, and not Cartesian closed. (The same situation will happen  for vector spaces later in this paper.)  Our theorem stops a bit short of saying that the category is monoidal closed, since several different elements of the function space might represent the same function. This non-uniqueness of representation can be overcome  by quotienting the function space  by an equivalence relation that identifies two elements if they represent the same single-use function; this is explained in Appendix~\ref{sec:quotient-category}. For now, we stick with non-uniqueness of representation, and we state the theorem as follows.

\begin{theorem}\label{thm:single-use-closed}
    Let $V$ and $W$ be objects (i.e.~linear types). There exists an object, denoted by  $\funspace V W$, and a morphism (i.e.~a partial single-use function)
    \begin{align*}
    eval : (\funspace V W) \otimes V \to  W
    \end{align*}
    with the following property. For every morphism
    \begin{align*}
    f : {X \otimes V} \to  W
    \end{align*}
    there is a (not necessarily unique) morphism
    \begin{align*}
    h :  X \to (\funspace V W)
    \end{align*}
    such that the following diagram commutes:
    \[
    \begin{tikzcd}
    X \otimes V 
    \arrow[r,"h \otimes id"]
    \arrow[dr,"f"']
    &
    (\funspace V W) \otimes V
    \arrow[d,"eval"] \\
    &
    W
    \end{tikzcd}
    \]
\end{theorem}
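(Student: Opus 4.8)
The plan is to define the function-space object $\funspace{V}{W}$ by induction on the linear type $V$, arranging things so that the $\otimes$-clause simply \emph{is} currying: set $\funspace{1}{W} = W$; $\funspace{V_1 + V_2}{W} = \funspace{V_1}{W} \& \funspace{V_2}{W}$ (a function defined on a disjoint union is a pair of functions of which exactly one gets used); $\funspace{V_1 \& V_2}{W} = \funspace{V_1}{W} + \funspace{V_2}{W}$ (a single-use function on a $\&$-pair must commit to one of the two components before inspecting it); and, crucially, $\funspace{V_1 \otimes V_2}{W} = \funspace{V_1}{\funspace{V_2}{W}}$. The genuinely atomic base case $\funspace{\atoms}{W}$ is handled by a secondary induction on $W$, in which the clause for $W = W_1 + W_2$ gains a summand $\atoms \otimes \bigl((W_1+W_2) \& (W_1+W_2)\bigr)$ recording ``the first equality test compares the input atom to a constant, then branches'', while the other output constructors are pushed through, e.g.\ $\funspace{\atoms}{W_1 \otimes W_2} = \funspace{\atoms}{W_1} \otimes W_2 + W_1 \otimes \funspace{\atoms}{W_2}$, using that the single atom can feed at most one factor. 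This recursion terminates under the lexicographic measure $(\lvert V\rvert, \lvert W\rvert)$. This is exactly the decision-tree construction of \Cref{ex:decision-tree-types}, extended from powers of $\atoms$ to all linear types and refined with $\&$ so that mutually exclusive branches count as ``single-use''. Note that $\sem{\funspace{V}{W}}$ is then automatically polynomial orbit-finite, which is the whole point: the function space stays inside the category.

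I would then construct the (possibly partial) morphism $eval\colon \funspace{V}{W} \otimes V \to W$ by induction on $V$ with the same case split -- in the $\otimes$-case, reassociate and apply $eval$ for $V_1$ then for $V_2$; in the $+$-case, distribute $\otimes$ over $+$ and apply the $\&$-projection matching the active summand before recursing; in the $\&$-case, distribute and project $V_1 \& V_2$ onto the $V_i$ dictated by the active summand of $\funspace{V_1}{W} + \funspace{V_2}{W}$ -- and, for a given $f\colon X \otimes V \to W$, build $h\colon X \to \funspace{V}{W}$ by induction on $V$ as well: the $\otimes$-case curries $f\colon (X \otimes V_1)\otimes V_2 \to W$ with respect to $V_2$ and then $V_1$; the $+$-case splits $f$ via distributivity into $f_i\colon X \otimes V_i \to W$, recurses, and pairs the results through the diagonal $X \to X \& X$ and the $\&$-combinator. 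Commutation of the triangle is then a routine induction on $V$: each clause of $eval$ is built to undo the corresponding clause of the currying, modulo the coherences of the prime functions (associativities, distributivities, (co)projections, (co)diagonals).

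The main obstacle is the currying step for $V = V_1 \& V_2$ (and, dually, the $\&$-clause of $eval$): one must turn an arbitrary single-use $f\colon X \otimes (V_1 \& V_2) \to W$ into a morphism $X \to \funspace{V_1}{W} + \funspace{V_2}{W}$, i.e.\ $f$ must, \emph{as a function of its first argument}, commit to the $V_1$-component or the $V_2$-component before touching it, with that commitment possibly entangled with whatever other use $f$ makes of that argument; reading this structure off the bare inductive presentation of $f$ seems to require normalising single-use functions first. This is where I would bring in game semantics, as the paper announces: interpret each linear type as a game ($\atoms$ as ``Opponent asks, Player answers with an atom'', $+$ as Player-chooses-a-side, $\&$ as Opponent-chooses-a-side, $\otimes$ as the tensor of games) and single-use functions as finite strategies on $V \loli W$. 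Then (i) the category of these games is already symmetric monoidal closed, so $eval$ and the witness $h$ exist as strategies and the triangle holds for free; (ii) all of these constructions preserve orbit-finiteness of the strategy set, since a strategy is a finite, equivariantly-constrained family of plays and each play mentions only finitely many atoms; and (iii) the strategy space of $V \loli W$ is not merely orbit-finite but is presented by the linear type $\funspace{V}{W}$ above, which is what lets it serve as an object of the category. The non-uniqueness of $h$ in the statement then reflects the passage from strategies (the genuine morphisms) to their syntactic representatives as elements of $\sem{\funspace{V}{W}}$, which is dealt with by the quotient of \Cref{sec:quotient-category}; a purely syntactic variant, replacing strategies by linear $\lambda$-terms modulo $\beta\eta$ (finitely many of each type, up to the choice of atom constants), would work similarly.
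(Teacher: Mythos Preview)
Your inductive definition of $\funspace{V}{W}$ has two clauses that do not work, and the failure is not a detail that game semantics can patch afterwards --- the proposed types are already too small.

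First, $\funspace{V_1 \& V_2}{W} = \funspace{V_1}{W} + \funspace{V_2}{W}$ is false. Take $V=W=\atoms\&\atoms$ and $f=\mathrm{id}$, which is single-use (it is $\mathrm{id}_\atoms \& \mathrm{id}_\atoms$ via the $\&$-combinator). Currying with $X=1$ would force $h:1\to \funspace{\atoms}{\atoms\&\atoms}+\funspace{\atoms}{\atoms\&\atoms}$, i.e.\ a \emph{fixed} choice of which input projection to take; but then $eval$ only ever sees one component of the input pair and cannot reproduce the identity. The point is that the projection out of an input $\&$ may be decided by interaction with the \emph{output}: in the game for $(\atoms\&\atoms)\to(\atoms\&\atoms)$, System chooses a side of the dualised input $\&$ only \emph{after} Environment has chosen a side of the output $\&$. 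Your ``main obstacle'' paragraph correctly senses the difficulty, but it is not that currying is hard to build --- the target type itself is wrong.

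Second, $\funspace{\atoms}{W_1\otimes W_2}=\funspace{\atoms}{W_1}\otimes W_2 + W_1\otimes\funspace{\atoms}{W_2}$ is also false. The single-use map $\atoms\to(1{+}1)\otimes(1{+}1)$ sending $a$ to $(a{=}\text{John},\,a{=}\text{John})$, obtained as the equality test followed by the atom-free (hence single-use) diagonal $1{+}1\to(1{+}1)\otimes(1{+}1)$, consumes the atom once yet its result feeds both tensor factors; neither summand of your type can express this. More generally, a single equality test can influence arbitrarily many parts of the output, so the ``atom goes into exactly one factor'' intuition is wrong.

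The paper does use game semantics, but not as a back-end for a type-directed induction on $V$. It builds one arena for $V\to W$ (library $\otimes$ dual of $V$ $\otimes$ $W$) and defines $\funspace{V}{W}$ as the linear type $\Strat(V\to W,k)$ by induction on the \emph{play length} $k$: at an Environment position take a $\&$ over all next moves, at a System position take a $+$ (with an $\atoms\otimes(\cdots)$ summand for opening a constant call). The substantive work your sketch does not anticipate is (a) obtaining a uniform bound $k$ --- strategies may in principle make unboundedly many ``irrelevant'' library calls (Example~\ref{ex:irrelevant-questions-from-composition}), and this is handled by the well-formed/read-consistent normalisation of Lemmas~\ref{lem:well-formed-bounded}--\ref{lem:all-well-formed} --- and (b) showing that strategy composition and that normalisation are themselves single-use maps between the $\Strat$ types (Lemma~\ref{lem:compose-single-use-linear-type}). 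Only then are $eval$ and currying assembled.
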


The above theorem is the main technical contribution of this paper, and its proof is presented in Appendix~\ref{sec:game-semantics}. The difficulty in the proof is finding a representation of the single-use  functions that is rich enough to capture all functions, but simple enough to be described by a linear type (in particular, the corresponding set will be orbit-finite). In  Section~\ref{sec:single-use-functions-over-polynomial-orbit-finite-sets}, when the types did not have $\&$, we could pull off a relatively simple construction, which was possible mainly due to the strong distributivity rules that allowed converting each type into a normal like $\atoms^{n_1} + \cdots + \atoms^{n_\ell}$. In the presence of $\&$, the distributivity rules are not as strong, and the way in which a single-use program can interact with its input is rather subtle. 
Our solution, and the technical core of this paper, is to use game semantics, appropriately extended to describe the type $\atoms$ and the operations on it that are allowed. This solution is presented entirely in the appendix.

\section{Beyond equality}
\label{sec:beyond-equality}

So far, we have only studied the case when the atoms are equipped with equality only. Consider now a more general case: let $\atoms$ be any relational structure, i.e.~any set equipped with relations (but not functions). For example, we could use the rational numbers with their linear order. Another example would be Presburger arithmetic, i.e.~the integers $\Nat$ together with addition $+$. Since we want to have relations only, we think of addition as a ternary relation $x + y =z$. 

The construction of the single-use category from Section~\ref{sec:single-use-sets} is generic enough so that it be generalized to any relational structure, and not just atoms with equality. (In fact, there is also a suitable generalization for function symbols, but we do not prove anything about it, so we do not talk about it.)

\begin{definition}[Single-use functions over a relational structure]\label{def:single-use-category-relational-structures}
    Let $\atoms$ be a relational structure. The \emph{single-use category over $\atoms$} is defined in the same way as in Definition~\ref{def:single-use-category}, except that the list of prime functions is extended with one prime function  $\atoms^n \to 1 + 1$ 
        for every $n$-ary relation in the vocabulary of $\atoms$. (Here, the power $\atoms^n$ uses $\otimes$.)
\end{definition}

Not only is the definition of the category generic, but the same is true for  the proof that function spaces exist. The generalized version of this theorem, stated below, makes only a relatively tame assumption, namely that the vocabulary has finitely many relations of each arity. For example, such vocabularies will arise if we start with some finite vocabulary, and then enrich it by creating a new relation symbol for every quantifier-free formula (modulo equivalence of quantifier-free formulas). In the following theorem, when speaking of a symmetric monoidal closed category, we mean that properties stated in the conclusion of Theorem~\ref{thm:single-use-closed}, in particular we do not require unique Currying.

\begin{theorem}\label{thm:single-use-closed-relational-structures}
    Consider a relational structure $\atoms$, in which for every $k \in \set{0,1,\ldots}$ there are finitely many relations of arity $k$. Then the single-use category over $\atoms$ is symmetric monoidal closed, with respect to the tensor product $\otimes$.
\end{theorem}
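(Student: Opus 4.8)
The plan is to re-run the game-semantic argument that establishes \Cref{thm:single-use-closed} in \Cref{sec:game-semantics}, checking that the only place where the equality structure on $\atoms$ intervenes can be replaced, mutatis mutandis, by an arbitrary relational vocabulary, and that the finiteness hypothesis of the theorem is exactly what keeps the resulting function-space objects expressible as linear types.

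First I would recall the shape of that proof: linear types are interpreted as games, single-use morphisms $X \to Y$ as finitely described strategies on $X \multimap Y$, and $\funspace V W$ is read off as a linear type from the ``tree of strategies'' on $V \multimap W$. The atomic type $\atoms$ enters only as a distinguished game component equipped with a fixed repertoire of moves a player may make when interacting with it: in the equality-only setting these moves are ``supply a constant atom'', ``copy an atom forward along the identity'', and, having exposed two atoms, ``branch on whether they are equal''. Importantly, the inductive interpretation of $+$, $\otimes$, $\&$, and the verification that $eval$ and currying are single-use, never inspect what these atom-moves are; they use only that at any position finitely many of them are available and that each one is itself realised by a prime single-use function.

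Next I would perform the substitution dictated by \Cref{def:single-use-category-relational-structures}: the repertoire of atom-moves becomes ``supply a constant atom'', ``copy an atom'', and, for each relation symbol $R$ and each way of plugging already-exposed atoms into $R$, ``branch on whether this tuple lies in $R$'' (realised by the prime function $\atoms^n \to 1+1$). At any position of a play only boundedly many atoms have been exposed, the bound depending only on the ambient linear type, so only relations of bounded arity are ever queried; by the hypothesis that each arity carries finitely many relation symbols, the set of atom-moves available at each position is again finite. Everything downstream then goes through verbatim: the tree of strategies on $V \multimap W$ is finitely branching and of bounded depth, hence is encoded by a linear type $\funspace V W$ over $1, \atoms, +, \otimes, \&$, and the constructions of $eval$ and of the Currying morphism $h$ from a given $f : X \otimes V \to W$ use the same prime functions and combinators as before, with relation tests in place of equality tests. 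This delivers, for $\atoms$, the conclusion of \Cref{thm:single-use-closed}, i.e.\ symmetric monoidal closure with respect to $\otimes$.

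The main obstacle --- the point that needs care rather than a one-line ``the same proof works'' --- is isolating precisely which properties of equality the appendix proof relies on: equality is symmetric and transitive, and its automorphism group is the full symmetric group of $\atoms$, and one must verify that none of these facts is used, only the bare interface ``finitely many prime atom-moves at each position''. The theorem's finiteness hypothesis guards the one genuinely quantitative step, namely that $\funspace V W$ stays a finite type expression; were some arity to admit infinitely many relations, the ``branch on $R$'' moves would form an infinite set at some position and the function-space object would no longer be a linear type (and would fail to be orbit-finite for, say, an oligomorphic $\atoms$). No further assumption on $\atoms$ --- decidability of its first-order theory, oligomorphy, $\omega$-categoricity --- is needed for closure itself; those enter only for the algorithmic consequences discussed elsewhere.
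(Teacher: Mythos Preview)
Your proposal is correct and takes essentially the same approach as the paper: the paper's proof of \Cref{thm:single-use-closed-relational-structures} is literally the one-sentence remark that \Cref{thm:single-use-closed} goes through verbatim once the library arena is extended with test arenas for the relation symbols, and that the finiteness hypothesis on the vocabulary is what guarantees finitely many available moves at each position (since the single-use restriction bounds the arity of any relation that can be queried). Your write-up is in fact more detailed than the paper's own sketch, and your closing observations---that no special algebraic property of equality is used, and that oligomorphy and decidability are irrelevant to closure---are accurate and match the paper's separation of \Cref{thm:single-use-closed-relational-structures} from \Cref{thm:first-order-decidable} and the results of \Cref{sec:two-way-automata}.
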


The theorem is proved in the same way as Theorem~\ref{thm:single-use-closed}.  The assumption on the vocabulary is used to ensure that in the corresponding game semantics, there are finitely available moves in any given moment, because the vocabulary can only be queried up to the maximal number of atoms in the input, due to the single-use restriction.

Note that this  theorem can be applied to any relational structure, including undecidable ones. Clearly, there must be some benefit from assuming that the structure has a decidable first-order theory, which means that there is an algorithm which checks if a first-order sentence is true in the structure. 
The benefit is that we can check if two morphisms are equal, as expressed in the following theorem. 

\begin{theorem}\label{thm:first-order-decidable}
    Consider a relational structure $\atoms$, in which for every $k \in \set{0,1,\ldots}$ there are finitely many relations of arity $k$. If this structure has a decidable first-order theory, then there is an algorithm for the following problem:
    \begin{itemize}
        \item {\bf Input.} Two morphisms, described by expressions using prime functions and combinators.
        \item {\bf Question.} Are they the same morphism?
    \end{itemize}
\end{theorem}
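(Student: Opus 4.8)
The plan is to prove something slightly stronger than needed: every morphism $f$ of the single-use category over $\atoms$ has a \emph{first-order definable graph, with parameters}, and a defining formula can be computed from any expression denoting $f$. Once this is in place, deciding whether two expressions denote the same morphism reduces to deciding the truth in $\atoms$ of a single first-order sentence whose parameters are the finitely many atoms named in the two expressions, which is exactly what the decidability hypothesis provides.

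For the bookkeeping, recall that a linear type $X$ determines, through its top-level $+$-structure, a canonical presentation of $\sem X$ as a finite disjoint union $\coprod_i \atoms^{n_i}$ of powers of $\atoms$ (the $n_i$ counting the $\atoms$-leaves of the corresponding summand; the distinction between $\otimes$ and $\&$ is irrelevant here, as both denote the set-theoretic product). Two morphisms can be equal only if they have the same source and target objects; assuming, as we may, that each expression comes with its source and target linear types (so that the question makes sense — otherwise one first runs type inference), we check that these agree, and then both graphs live in the same $\coprod_{i,j} \atoms^{n_i} \times \atoms^{m_j}$. The key claim, proved by induction on expressions, is then: for every single-use $f : \sem X \to \sem Y$ built from the prime functions and combinators, and for each pair $(i,j)$, one can compute a first-order formula $\varphi_f^{i,j}(x_1,\dots,x_{n_i},y_1,\dots,y_{m_j})$, with parameters drawn from the atoms labelling the ``constant $a$'' primes in the expression, defining the part of the graph of $f$ that maps the $i$-th input summand to the $j$-th output summand.

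The induction is routine. Every prime function is immediately first-order: the equality test and the relation queries $\atoms^n \to 1+1$ are defined by $x=y$ and $\neg(x=y)$, resp.\ $R(\bar x)$ and $\neg R(\bar x)$, on the two output summands; ``constant $a$'' contributes a parameter via $y=a$; and the structural maps (commutativity, associativity and the projections for $\otimes$, $\&$ and $+$, the co-diagonal, the diagonal, and all the distributivities) are, on underlying sets, mere rearrangements of tuples, hence defined by conjunctions of equalities between variables together with a fixed choice of summand. For the combinators, $f_1 \otimes f_2$, $f_1 \& f_2$ (identical on underlying sets) and $f_1 + f_2$ simply recombine the formulas of their arguments; and for $g \circ f$ with $f : \sem X \to \sem Z$ and $g : \sem Z \to \sem Y$, writing $\coprod_\ell \atoms^{k_\ell}$ for $\sem Z$, one sets $\varphi_{g\circ f}^{i,j}$ to be $\bigvee_\ell \exists z_1 \cdots \exists z_{k_\ell}\,\big(\varphi_f^{i,\ell}(\bar x,\bar z)\wedge \varphi_g^{\ell,j}(\bar z,\bar y)\big)$ — the existential over the intermediate polynomial orbit-finite set being a finite disjunction over its summands followed by a block of atom-quantifiers. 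All these operations on formulas are effective, so the induction yields the claimed algorithm producing $\varphi_f^{i,j}$. Finally, $f_1=f_2$ iff their graphs coincide, i.e.\ iff $\atoms \models \bigwedge_{i,j}\forall \bar x\,\forall \bar y\,\big(\varphi_{f_1}^{i,j}(\bar x,\bar y)\leftrightarrow \varphi_{f_2}^{i,j}(\bar x,\bar y)\big)$; this sentence mentions only atoms occurring in the two input expressions, and we decide it using the decidable first-order theory.

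The step I expect to require the most care is the handling of the parameters $c_1,\dots,c_s$ occurring in the final sentence: we need the decidability hypothesis on $\atoms$ to extend to sentences mentioning these atoms. For atoms with equality only (the setting of \Cref{thm:single-use-closed}) this is free, since the symmetric group is $s$-transitive on $s$-tuples of pairwise distinct atoms, so the sentence $\Phi(c_1,\dots,c_s)$ with distinct parameters is equivalent to the parameter-free sentence $\forall z_1 \cdots \forall z_s\,(\bigwedge_{k<l} z_k \neq z_l \to \Phi(z_1,\dots,z_s))$; for a general oligomorphic structure one replaces ``pairwise distinct'' by the (finitely many, effectively enumerable) atomic type of the parameter tuple, and for structures with definable elements, such as Presburger arithmetic, the parameters can be eliminated directly. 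Beyond this point, the proof is a purely syntax-directed induction that never needs to reason about the global behaviour of a morphism, because the first-order definition is assembled compositionally alongside the expression.
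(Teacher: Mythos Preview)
Your proof takes essentially the same route as the paper's: the paper packages the argument as a faithful functor from the single-use category into a category whose objects are polynomial sets over $\atoms$ and whose morphisms are first-order definable functions, and your induction on expressions is exactly what verifies that this functor is well-defined and computable. The concern you flag about parameters is legitimate and is simply glossed over in the paper's (very terse) proof; it is less a gap in your argument than a point where the theorem statement itself leaves implicit how the atoms appearing in the input expressions are presented to the algorithm.
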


The above theorem gives us a reasonable category of single-use functions over a relational structure with a decidable first-order theory. This applies to structures such as Presburger arithmetic, or the real field $(\mathbb R, +, \cdot, \leq)$ with the field operations viewed as ternary relations. 
However, a decidable first-order theory is not the only property needed to ensure that the category is appropriate to automata. If we want to model automata and their decision procedures, we will also need to execute certain fixpoint algorithms, as explained in~\cite{bojanczyk_slightly2018}. This will be illustrated in the next section, where we prove that emptiness for automata, including two-way automata, is decidable under an additional assumption on the structure called \emph{oligomorphism}. This assumption is the standard assumption used to ensure that orbit-finiteness is meaningful. 


\section{Traced categories and two-way automata}
\label{sec:two-way-automata}

In this section, we show that the single-use category is traced with respect to co-product, and we use this fact to model deterministic two-way automata.
We begin by describing the usual trace construction in the category of sets and partial functions. 
Consider some partial function $
f : A + X \to B + X.$
For a number $k \in \set{0,1,\ldots}$ we can define a partial function of type $A \to B$, which is called the \emph{$k$-th iteration}, as  follows by induction on $k$. The $0$-th iteration is  completely undefined. For $k > 0$, the $k$-th iteration is defined as follows. First we apply $f$ to the input. If the output is undefined or from  $B$, then that is the final output of the $k$-th iteration. Otherwise, if the output is from $X$, then we apply the $(k-1)$-st iteration and return that output (which may be undefined). It is easy to see that the iterations are ordered by inclusion, i.e.~when viewed as binary relations they form an increasing chain. The limit of this chain, which is a partial function from $A$ to $B$ is defined to be the \emph{trace of $f$}. 

We will now show that the traces exist also in the single-use category, even if we take the general variant from Section~\ref{sec:beyond-equality} that arises from some relational structure. However, we will need to make a certain assumption on this structure. Call a structure $\atoms$  \emph{oligomorphic} if for every $k \in \set{0,1,\ldots}$, there  are finitely many orbits in $\atoms^k$ under the action of the automorphism group of $\atoms$.
Oligomorphism is the standard assumption in the theory of sets with atoms~\cite{bojanczyk_slightly2018}. In particular, ensures that the notion of orbit-finite set is meaningful. Examples of oligomorphic structures include: the atoms with equality only, the rational numbers with their linear order, and the Rado graph. Nonexamples include: the integers with their linear order, Presburger arithmetic, and the real field.

The main observation is that for single-use functions over an oligomorphic structure, the trace is achieved in finitely many steps.
\begin{lemma}\label{lem:traced-finite-iteration} Let $\atoms$ be an oligomorphic structure, let $A,B,X$ be linear types, and let 
    \begin{align*}
        f : \sem{A + X} \to \sem{B + X}
        \end{align*}
        be a single-use function over $\atoms$. There is some $k$ such that the trace of $f$ is equal to $f^{(k)}$.
\end{lemma}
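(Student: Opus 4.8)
The plan is to show that the chain of iterations $f^{(0)} \subseteq f^{(1)} \subseteq f^{(2)} \subseteq \cdots$ stabilises after finitely many steps. Since each $f^{(k+1)}$ is obtained from $f^{(k)}$ by one more ``layer'' of applying $f$, stabilisation is equivalent to saying that the iteration never gets stuck in an infinite loop through the $X$-component: once $f$ has been applied more than some bounded number of times without exiting into $B$, it never will. The key resource bound is the single-use restriction: a single-use function over $\atoms$ can only mention a bounded number $N$ of atom constants in its ``source code'' (formally, in any expression built from prime functions and combinators witnessing it), and each application of $f$ can inspect its input only through a bounded number of relational queries. Combined with oligomorphism, this should force periodic behaviour.

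First I would make precise the notion of the ``state'' reached after $j$ iterations on a fixed input $a \in \sem A$: it is an element $x_j \in \sem X$ together with, crucially, the finite set $S_j \subseteq \atoms$ of atoms ``known'' at that point, i.e.\ the atoms appearing in $x_j$ together with the (bounded) support of $f$. The single-use discipline guarantees that $|S_j|$ is bounded uniformly in $j$: the support of $f$ contributes a fixed finite set, and $x_j \in \sem X$ can contain at most $\mathrm{arity}(X)$ atoms, where by $\mathrm{arity}(X)$ I mean the maximal number of $\atoms$-leaves in the type $X$ — a finite quantity depending only on the type. So all the $x_j$ live in $\sem X$, whose number of orbits is finite (it is orbit-finite), and moreover each $x_j$ together with the relevant ``context'' of atoms lies in one of finitely many orbits: here oligomorphism of $\atoms$ is used to bound the number of orbits of tuples of atoms of the bounded length $|S_j|$. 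Hence the pair $(\text{orbit of } x_j \text{ relative to the support of } f)$ ranges over a finite set, say of size $R$.

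Next I would argue that the behaviour of the iteration on input $a$ depends only on the current orbit-type: if at step $j$ and step $j'$ the configurations $x_j$ and $x_{j'}$ are in the same orbit by an automorphism fixing the support of $f$, then applying $f$ once more produces configurations in the same orbit again (because $f$ is equivariant up to its support), and in particular $f$ exits into $B$ at step $j+1$ iff it exits at step $j'+1$. Therefore, following a single input $a$, if the iteration has not exited into $B$ within $R+1$ steps, then by pigeonhole two of the first $R+1$ configurations share an orbit-type, the subsequent behaviour is periodic, and it never exits — so $f^{(R+1)}(a) = f^{(R+2)}(a) = \cdots$, all undefined on the ``looping'' part, and equal to the trace value otherwise. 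Since $R$ is independent of $a$, taking $k = R+1$ gives $f^{(k)} = f^{(k+1)} = \cdots = \mathrm{trace}(f)$.

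The main obstacle I anticipate is making the ``configuration = element plus bounded atom context, and $f$ acts on orbit-types'' argument fully rigorous, since single-use functions are defined as composites of prime functions and combinators rather than by a clean operational semantics — one must check that a bounded-support invariant is genuinely preserved by all the combinators ($\otimes$, $\&$, $+$, composition), which in particular requires that $\&$ does not smuggle in unbounded information (it does not, since $\sem{X\&Y}=\sem X\times\sem Y$ has bounded atom-arity). An alternative, possibly cleaner, route would be to invoke the game-semantics machinery of the appendix: a single-use function corresponds to a strategy with bounded ``memory'' of atoms, and the trace construction composes such strategies along the $X$-component; oligomorphy then bounds the number of reachable game positions, yielding the finite $k$ directly. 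I would carry out the elementary pigeonhole argument in the main text and, if needed, defer the verification of the bounded-support invariant to a lemma proved alongside the game semantics.
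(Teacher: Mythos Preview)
Your pigeonhole argument on the trajectory $x_1,x_2,\ldots\in\sem{X}$ is correct: if $S$ is a finite support of $f$ then $f$ commutes with every automorphism fixing $S$, so the $S$-orbit of $x_j$ determines the $S$-orbit of $x_{j+1}$ (and whether the next step lands in $B$); since $\sem{X}$ has only finitely many $S$-orbits by oligomorphism, the trajectory either exits within that many steps or is eventually periodic and never exits. This gives a uniform $k$.

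However, you are working much harder than necessary, and in particular the single-use hypothesis plays no role --- everything you actually use is that $f$ is \emph{finitely supported}, which any single-use function trivially is. All the material about bounded atom-arity of $\sem{X}$, the ``atom context $S_j$'', and the possible appeal to game semantics is superfluous: once you fix a support $S$ of $f$, the only fact needed is that an orbit-finite set has finitely many $S$-orbits. The paper's proof exploits this even more directly and avoids the per-input trajectory analysis altogether: the partial functions $f^{(k)}$ all share the support $S$, hence so do their domains, which form a monotone chain of $S$-supported subsets of the orbit-finite set $\sem{A}$; since there are only finitely many such subsets, the chain stabilises, and stabilisation of domains forces stabilisation of the (increasing) graphs. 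Your approach has the minor advantage of giving an explicit bound (the number of $S$-orbits of $\sem{X}$), while the paper's argument is shorter, uses a standard lemma about orbit-finite sets off the shelf, and makes transparent that single-use is not needed. I would drop the discussion of single-use bounds and the game-semantics fallback, and either present your orbit-pigeonhole argument in two lines or switch to the chain-stabilisation argument.
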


Since the trace is achieved in finitely many steps, and it is easily seen to be constructed using operations on functions that preserve the single-use condition, the trace is also single-use. (In this section, we use single-use partial functions, which are defined to be single-use functions of type $X \to Y +1$, where $1$ represents the undefined value.) Therefore, the single-use partial functions over an oligomorphic structure form what is called a \emph{traced category}, with respect to $+$.

We now use the trace construction to model deterministic two-way automata. (The idea that traced categories are a natural setting for two-way automata was already noted in~\cite{hines2003categorical}.) We define a deterministic two-way automaton in the same way as a deterministic one-way automaton, except that the transition function now is a partial function from $\Sigma \otimes (Q + Q)$ to $Q + Q + 1$. 
The two copies of $Q$ represent entering the letter from the left or right, and the $1$ in the output represents acceptance.  The function is partial, and thus if we want to model it as a complete function, then there will be another $1$ in its output, representing rejection.

\begin{theorem}\label{thm:two-way-automata}
    Let $\atoms$ be an oligomorphic structure with a decidable first-order theory. Then the emptiness problem is decidable for single-use deterministic  two-way automata over $\atoms$.
\end{theorem}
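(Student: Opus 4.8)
\textbf{Proof plan for Theorem~\ref{thm:two-way-automata}.}

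The plan is to reduce emptiness of a single-use deterministic two-way automaton to a decidable problem about single-use morphisms, using the trace construction from Lemma~\ref{lem:traced-finite-iteration} to collapse the two-way behaviour into a one-way object, and then invoke Theorem~\ref{thm:first-order-decidable} together with an orbit-by-orbit search of an orbit-finite reachable set. First I would recall that a run of the two-way automaton on a fixed input word $w = a_1 \cdots a_n$ can be analysed by the standard ``crossing sequence''/``behaviour function'' technique: for each position, one records the partial function that maps a state entering the boundary (from the left or the right) to the eventual state leaving that boundary (or to acceptance). Concretely, associate to each letter $a$ the transition function $\delta_a : \sem{Q + Q} \to \sem{Q + Q + 1}$ obtained by plugging $a$ into the first coordinate of the transition function; this is a single-use partial function by Definition~\ref{def:single-use-category-relational-structures} (using the constant-$a$ prime function and the combinators). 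The behaviour of a single letter is then the trace — over the ``internal'' copy of $Q$ that corresponds to bouncing inside that one-letter block — of $\delta_a$ rearranged to have type $\sem{Q} + \sem{Q} \to \sem{Q} + \sem{Q} + \sem{1}$, and by Lemma~\ref{lem:traced-finite-iteration} this trace exists and is itself a single-use partial function, living in the orbit-finite hom-set of single-use morphisms of type $Q + Q \to Q + Q + 1$ (which is orbit-finite by Theorem~\ref{thm:single-use-closed} applied to express these morphisms inside a function-space object). Composing behaviours of adjacent blocks is again a single-use operation (it is once more a trace, over the shared boundary copy of $Q$, of the juxtaposition of the two behaviour functions, so Lemma~\ref{lem:traced-finite-iteration} applies), so the set of all behaviours realisable by nonempty input words is generated, under this composition monoid operation, from the finitely many single-letter behaviours $\{\beta_a : a \in \atoms\}$ — but up to renaming atoms there are only finitely many of these, since $\atoms$ is oligomorphic and a single-use morphism mentions only boundedly many atoms.

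The second ingredient is to turn ``reachable behaviours'' into a finite computation. The set $M$ of behaviours reachable from the single-letter generators is an equivariant subset of an orbit-finite set (the single-use hom-set $Q + Q \Rightarrow Q + Q + 1$), so it has finitely many orbits, and it can be computed by a saturation procedure: start with the orbits of the $\beta_a$, repeatedly close under the (single-use, hence computable) composition operation, and stop when no new orbit appears. Termination is guaranteed because the ambient set is orbit-finite; each step is effective because, by Theorem~\ref{thm:first-order-decidable}, we can decide whether a newly produced morphism is equal to (a renaming of) one already in the set, and we can enumerate orbit representatives of an orbit-finite set. Here is where oligomorphism is used a second time, beyond Lemma~\ref{lem:traced-finite-iteration}: it is exactly the hypothesis that makes orbit-finite sets finitely presentable and searchable orbit-by-orbit, as recalled in~\cite{bojanczyk_slightly2018}. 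Finally, the automaton is nonempty iff some behaviour in $M$, when fed the designated initial state entering from the left, yields acceptance (the output in the copy $1$) rather than being undefined or exiting the block; this is a property of a single morphism that we can check, again via Theorem~\ref{thm:first-order-decidable} (decide whether the relevant composite morphism of type $1 \to 1 + 1$ equals the ``accept'' constant).

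I expect the main obstacle to be the bookkeeping needed to express ``behaviour of a word'' cleanly as a single-use morphism and to show that block composition is a single-use operation that fits the trace framework of Lemma~\ref{lem:traced-finite-iteration}. The subtlety is that the trace in Section~\ref{sec:two-way-automata} is defined for types of the shape $A + X \to B + X$ where $X$ is the fed-back coproduct summand, whereas the natural description of two-way runs involves feeding back a copy of $Q$ that is genuinely reused (a state can cross the same boundary many times), so one must be careful that the reuse happens only through the trace's iteration — which is allowed — and not inside a single application of the transition function — which would violate single-use. Checking that $\delta_a$ and the juxtaposition-of-behaviours maps are honestly single-use, and that the feedback wiring has exactly the coproduct shape required by Lemma~\ref{lem:traced-finite-iteration}, is the delicate point; once that is in place, the decidability argument is the routine orbit-finite saturation sketched above, powered by Theorems~\ref{thm:single-use-closed-relational-structures} and~\ref{thm:first-order-decidable}.
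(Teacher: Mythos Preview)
Your overall architecture matches the paper's proof: associate to each word a behaviour in the single-use function space $(Q+Q)\Rightarrow(Q+Q+1)$, build these behaviours compositionally via the trace of Lemma~\ref{lem:traced-finite-iteration}, and then saturate inside this orbit-finite set to decide whether an accepting behaviour is reachable. That is exactly what the paper does.

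There is, however, one genuine gap in your justification. You write that block composition is ``a single-use operation'' and later invoke ``the (single-use, hence computable) composition operation'' in the saturation step. This conflates two different claims. Lemma~\ref{lem:traced-finite-iteration} tells you that for each \emph{fixed} pair of behaviours the resulting composite is a single-use function of type $Q+Q\to Q+Q+1$; it does \emph{not} say that the map $(\sigma,\tau)\mapsto\sigma\ast\tau$ on the function-space object is itself single-use. Indeed the paper states explicitly that this monoid operation is \emph{not} single-use, because tracing is not single-use as an operation on function spaces (intuitively, iterating the fed-back function $k$ times would require $k$ tensor copies of it, which you cannot manufacture from a single copy). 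What saves the argument is that the operation is nevertheless \emph{equivariant} (and first-order definable), so the generated submonoid is an equivariant subset of an orbit-finite set and can be computed by the standard orbit-by-orbit fixpoint procedure from~\cite{bojanczyk_slightly2018}. Your invocation of Theorem~\ref{thm:first-order-decidable} for the final acceptance check is fine, but the computability of the saturation step should rest on equivariance of the monoid operation, not on a single-use property it does not have. (A minor aside: no trace is needed for a one-letter block---the behaviour of a single letter $a$ is simply $\delta(a,-)$, with tracing entering only when two blocks are joined.)
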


The single-use restriction is crucial, since without it the emptiness problem becomes undecidable for all choices of $\atoms$ where the universe is infinite. The above theorem is the first known example of two-way model that has decidable emptiness and uses atoms that are not the equality atoms.

\section{Orbit-finite vector spaces}
As a complement to the single-use category described before, we present a second example of a symmetric monoidal closed category that  uses atoms. In this example, instead of restricting functions, we generalize them. This generalization is based on the  orbit-finitely spanned vector spaces that were introduced in~\cite{bojanczykKM21OrbitFiniteVector}. As mentioned in the introduction,  this section contains no new technical results beyond those from~\cite{bojanczykKM21OrbitFiniteVector}. 

To explain the intuitive reason behind the usefulness of vector spaces, consider finite (and not orbit-finite) sets. The number of functions from set with $n$ elements to a set with $m$ elements is $m^n$. However, if we consider linear maps instead of functions without any structure, then we get an exponential improvement, namely instead of $m^n$ we will have $m \cdot n$. 
This is because a linear map from a vector space of dimension $n$ to a vector space of dimension $m$ can be specified by giving an $m \times n$ matrix. As we show in this section,  in the orbit-finite world, the improvement is more pronounced: from infinite to finite.

For the sake of concreteness, all  vector spaces will be over the field of rational numbers. However, the choice of the field will not affect the results. The general idea is that the field has no atoms in it, and therefore it has a trivial action of atom automorphisms. 
\begin{definition}[Vector space with atoms]
    A \emph{vector space with atoms} is a set with atoms, equipped with a vector space structure, such that the vector space structure is equivariant, i.e.~vector addition is equivariant, and  scalar multiplication $v \mapsto cv$ is equivariant for every field element $c$.
\end{definition}

\begin{example} \label{ex:lina} For a set $X$, let us write $\Lin X$ for the vector space that consists of finite linear combinations of elements from $X$. When $X$ is the set of atoms, the resulting vector space $\Lin \atoms$ is a vector space with atoms, because it comes equipped with a natural action of atom automorphisms.  An example vector in this space is $
3 \cdot \text{John} + 2 \cdot \text{Eve} - 5 \cdot \text{Mark}.
$ \exampleend
\end{example}

When talking about vector spaces with atoms, one has to be careful about using bases. This is because finding a basis might require choice, and choice is not available in the presence of atoms. 
For example, consider space $\Lin \atoms$ from Example~\ref{ex:lina}, and its subspace $V$ which contains those vectors where all coefficients sum up to zero. This space is spanned (i.e.~generated using linear combinations) by the orbit-finite set which consists of all pairs $a - b$, where $a$ and $b$ are distinct atoms.
However, this spanning set is not a basis, since it contains linearly dependent vectors, for example John - Eve and Eve - John. Keeping only one of these vectors would require choice, which cannot be done using equivariant functions, and in fact this space has no basis that is equivariant~\cite[Example 6]{bojanczykKM21OrbitFiniteVector}. For this reason, the appropriate notion of finiteness is having an orbit-finite spanning set, which is not necessarily linearly independent. This leads to the following category.

\begin{definition}\label{def:orbit-finite-vector-space-category}
    The category of orbit-finitely spanned vector spaces is:
    \begin{enumerate}
        \item The objects are vector spaces with atoms that have an orbit-finite spanning set.
        \item The morphisms are equivariant linear maps.
    \end{enumerate}
\end{definition}

For the above category, we will use tensor product $\otimes$; this is the usual notion of tensor product for vector spaces. For example, if we take vector spaces $\Lin X$ and $\Lin Y$, i.e.~finite linear combinations of orbit-finite sets $X$ and $Y$ respectively, then the resulting tensor product is $\Lin (X \times Y)$. More generally, the tensor product of two orbit-finitely spanned vector spaces is also orbit-finitely spanned~\cite[Theorem VI.3]{bojanczykKM21OrbitFiniteVector}. 


\begin{theorem}\label{thm:orbit-finite-vector-space-closed} The category of orbit-finitely spanned vector spaces is symmetric monoidal closed, with respect to the tensor product.
\end{theorem}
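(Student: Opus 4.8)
\textbf{Proof plan for Theorem~\ref{thm:orbit-finite-vector-space-closed}.}
The plan is to exhibit, for any two orbit-finitely spanned vector spaces $V$ and $W$, an internal hom object $V \loli W$ together with an evaluation map, and to verify the universal property of currying. The natural candidate is the space $\mathrm{Lin}(V,W)$ of all equivariant linear maps $V \to W$, equipped with the pointwise vector-space structure and the natural action of atom automorphisms $(\pi \cdot \phi)(v) = \pi(\phi(\pi^{-1}(v)))$. The first and most important step is therefore to show that $\mathrm{Lin}(V,W)$ is itself an object of the category, i.e.~that it has an orbit-finite spanning set. This is the point where I would lean hardest on the machinery of~\cite{bojanczykKM21OrbitFiniteVector}: using an orbit-finite spanning set $S \subseteq V$, an equivariant linear map is determined by its (finitely supported) values on $S$, so $\mathrm{Lin}(V,W)$ embeds equivariantly and linearly into the space of finitely supported functions $S \to W$; one then argues that, because $W$ is orbit-finitely spanned and $S$ is orbit-finite, this function space — suitably cut down to the linear maps satisfying the finitely many linear relations that hold among elements of $S$ — is again orbit-finitely spanned. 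Concretely I expect this to reduce, via the standard decomposition of an orbit-finite set into finitely many single orbits $\mathrm{Orb}(s_i)$, to understanding $\mathrm{Lin}(\mathrm{Lin}\,\mathrm{Orb}(s), W)$, and for a single-orbit generator this is controlled by the stabilizer of $s$ and hence remains orbit-finite.

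Once $V \loli W$ is known to be an object, the second step is to define $eval : (V \loli W) \otimes V \to W$ by the bilinear extension of $(\phi, v) \mapsto \phi(v)$; this is well-defined on the tensor product by the universal property of $\otimes$ for vector spaces with atoms, it is equivariant because the action on $V \loli W$ was chosen precisely to make application equivariant, and it is linear by construction. The third step is currying: given an equivariant linear $f : X \otimes V \to W$, define $h : X \to (V \loli W)$ by $h(x)(v) = f(x \otimes v)$. Linearity of $h$ in $x$ and equivariance both follow from the corresponding properties of $f$ together with the bilinearity of $\otimes$; the commutativity of the triangle $eval \circ (h \otimes \mathrm{id}) = f$ is then immediate by checking on pure tensors $x \otimes v$ and extending linearly. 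Unlike in Theorem~\ref{thm:single-use-closed}, here $h$ is genuinely unique — a linear map out of $X \otimes V$ is determined by its values on pure tensors — so we even get the stronger statement that the category is monoidal closed on the nose, not merely ``closed up to non-unique currying''.

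The final bookkeeping step is to record that $\otimes$ is a symmetric monoidal structure on the category: $\mathrm{Lin}$ of a point is the monoidal unit $\mathbb{Q}$, associativity and symmetry of the tensor product of vector spaces are equivariant and restrict to orbit-finitely spanned spaces (using the closure of orbit-finitely spanned spaces under $\otimes$, i.e.~\cite[Theorem VI.3]{bojanczykKM21OrbitFiniteVector}), and the triangle and pentagon coherences are inherited from ordinary vector spaces. Combined with the adjunction $\mathrm{Hom}(X \otimes V, W) \cong \mathrm{Hom}(X, V \loli W)$ established above, this yields that the category is symmetric monoidal closed.

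\textbf{Main obstacle.} The only non-formal step is the first one: proving that the space of equivariant linear maps between two orbit-finitely spanned spaces is again orbit-finitely spanned. Everything else is a routine transfer of the classical vector-space closed structure, made legal by equivariance. The subtlety in the hard step is that one cannot pick bases, so one must produce an explicit orbit-finite spanning set of $V \loli W$ rather than count dimensions; I expect the cleanest route is to dualize — first handle the case $W = \mathbb{Q}$, identifying the equivariant dual of an orbit-finitely spanned space as orbit-finitely spanned (which is essentially already in~\cite{bojanczykKM21OrbitFiniteVector}), then obtain the general internal hom as $(V \loli \mathbb{Q}) \otimes W$ when $W$ is suitably finite-dimensional per orbit, and reduce the general case to this by the orbit decomposition of a spanning set of $W$.
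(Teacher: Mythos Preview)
Your proposal has one genuine error that, taken literally, breaks the argument: you define the internal hom $V \loli W$ as the space of \emph{equivariant} linear maps $V \to W$. It must instead be the space of \emph{finitely supported} linear maps. The paper makes exactly this distinction in Section~\ref{sec:orbit-finite-function-spaces}, and your own currying step exposes the problem: for an equivariant $f : X \otimes V \to W$ and an element $x \in X$ with nontrivial support, the partially applied map $h(x) = f(x \otimes -)$ is supported by the support of $x$, hence finitely supported but typically not equivariant. So $h$ would not land in your proposed hom object. (Your action formula $(\pi\cdot\phi)(v) = \pi(\phi(\pi^{-1}v))$ is trivial on equivariant $\phi$, which is another symptom of the same slip; presumably you had the right object in mind and misnamed it.)

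Once you replace ``equivariant'' by ``finitely supported'', your plan is essentially the paper's. Both treat evaluation and currying as inherited from ordinary linear algebra, with equivariance coming for free, and both identify the single nontrivial step as showing that the hom space is orbit-finitely spanned, reducing this to the dual-space result~\cite[Corollary~VI.5]{bojanczykKM21OrbitFiniteVector}. The paper's reduction is shorter than yours: it observes directly that $\funspace{V}{W}$ is (equivariantly, linearly) isomorphic to the dual of $V \otimes W$, which is orbit-finitely spanned by the cited corollary together with closure under $\otimes$. Your route --- embed into finitely supported functions on a spanning set, decompose into orbits, then invoke $V^* \otimes W$ --- reaches the same destination but with more moving parts; in particular the step ``obtain the general internal hom as $(V \loli \mathbb{Q}) \otimes W$'' requires you to argue that the canonical map $V^* \otimes W \to \funspace{V}{W}$ is surjective in the orbit-finite setting, which is not automatic and is exactly what the paper sidesteps by passing to $(V \otimes W)^*$ instead.
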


A corollary of the above theorem is that, in the category of orbit-finitely spanned vector spaces, deterministic automata recognize the same languages as monoids. This is because the usual translation can be done, with a deterministic automaton of states $Q$ being mapped to a monoid with elements $\funspace Q Q$. This was already observed in~\cite[Theorem VIII.3]{bojanczykKM21OrbitFiniteVector}.

However, there are two limitations of the orbit-finitely spanned vector spaces. 

The first limitation is that the existence of function spaces is dependent on the choice of atoms. Theorem~\ref{thm:orbit-finite-vector-space-closed}  works when the atoms have equality only, and it also works when the atoms are equipped with a total order. This is because the dual spaces are orbit-finitely spanned in these cases, as proved in~\cite[Corollary VI.5]{bojanczykKM21OrbitFiniteVector}. However, the dual spaces are no longer orbit-finitely spanned for other choices of  atoms, such as the Rado graph, see~\cite[Example 9]{bojanczykKM21OrbitFiniteVector}. This is in contrast to the single-use category, where the existence of function spaces is independent of the choice of atoms. 

A second limitation is that it does not support two-way automata, unlike the single-use category (see Section~\ref{sec:two-way-automata}). This is because this category generalizes the orbit-finite category (i.e.~it admits a faithful functor from it), and in the orbit-finite category emptiness is undecidable for deterministic two-way automata~\cite[Theorem 5.3]{nevenFiniteStateMachines2004}. This precludes the kind of traced construction that we did in the single-use category. This issue appears already without atoms: the category of finite-dimensional vector spaces is not traced with respect to the sum $+$ of vector spaces.

\bibliography{bib}

\appendix

\newcommand{\invar}[1]{#1_{\mathrm{in}}}
\newcommand{\outvar}[1]{#1_{\mathrm{out}}}

\section{Game semantics}
\label{sec:game-semantics}

This section is devoted to the proof of Theorem~\ref{thm:single-use-closed}. To construct the function space $X \Rightarrow Y$, we use game semantics to identify a certain normal form of programs that compute single-use functions. The presentation in this section is self-contained, and does not assume any knowledge of game semantics. We base our notation on~\cite{abramsky2013semantics}.

Let us begin with a brief motivation for why game semantics will be useful.

While it is intuitively clear which functions should be allowed as single-use for simple types such as $\atoms \to 1+1$ or $\atoms \otimes \atoms \to \atoms + \atoms$, these intuitions start to falter when considering more complex types. How can one show that a function is \emph{not} single-use? If one were to use the definition of single-use functions alone, then one would need to rule out any possibility of constructing the function from the primes using the combinators, including constructions that use composition many times, and with unknown intermediate types. 

This is the reason why we consider game semantics. It will allow us to give  a more principled description of the intuition that pairs of type $X \otimes Y$ can be used on both coordinates, while pairs of type $X + Y$ can be used on a chosen coordinate only. The idea behind game semantics is to give the description in terms of an interaction between two players.  The two players are:
\begin{enumerate}
    \item System, who represents the function (we will identify with this player); and
    \item Environment, who supplies inputs and requests outputs of the function.
\end{enumerate}
One of the intuitions behind the setup is that if a type $X \& Y$ appears in the input of the function, then it is the System who can choose to use $X$ or $Y$, while if the type appears in the output, then it is the Environment who makes the choice. (In this paper, we consider functions of first-order types of the form $X \to Y$, where $X$ and $Y$ are linear types that do not use $\otimes$, and therefore there will be a clear distinction between input and output values.)
Before giving a formal definition of game semantics, we give simple example of the interaction.

\begin{example}\label{ex:amp-otimes-distr}
    Consider the two types 
    \begin{align*}
        X \otimes (Y \& Z) 
        \quad \text{and} \quad
        (X \otimes Y) \& (X \otimes Z).
    \end{align*}
    Among the prime functions in Table~\ref{fig:prime-morphisms-with-with}, we find distributivity in the direction $\rightarrow$, but not in the direction $\leftarrow$. We explain this asymmetry using the interaction between two players System and Enviroment.

    Let us first consider the interaction in the  direction $\rightarrow$. Player Environment begins be requesting an output. Since this output is of type $(X \otimes Y) \& (X \otimes Z)$, this means that Environment can choose to request either of the two types  $X \otimes Y$ and $X \otimes Z$. Suppose that Environment requests $X \otimes Y$. Now player System needs to react, and produce two elements: one of type $X$ and one of type $Y$. Both can be obtained from the input; for the second one player System can choose how to resolve the input $Y \& Z$ to get the appropriate value. 

    Consider now the interaction in the opposite direction $\leftarrow$. As we will see, player System will be unable to react to the behavior of player Environment, which will demonstrate that there is no distributivity in this direction. The problem is that player Environment can begin by requesting an element of type $X$, since the output type is $X \otimes (Y \& Z)$, while still reserving the possibility to request $Y \& Z$ in the future (because the tensor product $\otimes$ means that both output values need to be produced). To produce this element, player System will need  choose one of the two coordinates in the input type $(X \otimes Y) \& (X \otimes Z)$, and any of these two  choices will be premature, since player Environment can then request the opposite choice in the output type.  \exampleend
\end{example}

As illustrated in the above example, we will use a game to describe the possible behaviours of a function, as modelled by behaviours of player System. The game will be played in an arena, which will arise from the type of the function, and will tell us what are the possibilities for the moves of both players.  Here is the  outline the plan for the rest of this section. 
\begin{enumerate}
    \item For every two linear types $X$ and $Y$, we  define an arena, which is a data structure that describes all possible interactions between players Environment and System that can arise when running a  single-use  functions of type $X \to Y$;
    \item In each arena, we will be interested in the strategies of player System, i.e.~the ways in which System reacts to moves of Environment. We will show how such strategies can be interpreted as single-use functions: to a strategy we will assign a single-use function of type $X \to Y$ that is represented by this strategy. This mapping will be partial, i.e.~some ill-behaved strategies will not represent any functions. 
    \item We will show that the  set of strategies in an arena is  large enough to represent all single-use functions, but small enough to be orbit-finite. 
    \item We will then strengthen this: not only is the set of strategies orbit-finite, but it can be equipped with the structure of a linear type, such that both evaluation and Currying will be single-use functions.
\end{enumerate}

The arenas from the first step of the plan will  be defined in two sub-steps. We begin by defining arenas and strategies for functions that do not use the structure of the atoms, i.e. constants and equality tests. This will be a fairly generic definition, almost identical to the classical game semantics for linear logic. Then we will extend the definition to cover the extra structure. 

\subsection{Arenas and strategies without constants and equality tests}
\label{sec:arenas-without-constants-and-equality-tests}

We begin with a simpler version of the game semantics, in which the arenas and strategies will describe functions that are not allowed to perform equality tests, and are not allowed to use constants. These strategies will model functions such as the identity function $\atoms \to \atoms$, which directly passes its input to its output, but they  will not model the equality test $\atoms \otimes \atoms \to 1 + 1$, or the constant functions of type $1 \to \atoms$. The general idea is to use standard game semantics for linear logic, with an extra feature that we call \emph{register operations}. The register operations will be used to model the way in which atoms are passed from the input to output. For example, in the identity function,  Environment will  write the input atom into the register, and then player System will read the output atom from that register. The following definition of an arena is based on the definition from \cite[p.4]{abramsky2013semantics}, slightly
adapted for the context of this paper:
\begin{definition}[Arena] \label{def:arena}
    An \emph{arena} consists of:
    \begin{enumerate}
        \item A set of \emph{moves}, with each move having an assigned owner, who is either ``System'' or ``Environment'', and one of three\footnote{\label{footnote:read-write} In all arenas that we consider, the ``read'' moves will be owned by System and the ``write'' moves will be owned by Environment. Therefore, we could simplify the register operations and have just one, called ``read/write'', whose status is determined by its owner. 
        } register operations, which are ``none'', ``read'', or ``write''.
                \item A set of plays, which a set of finite sequences of moves that is closed under prefixes, and such that in every play, the owner of the first move is Environment, and then the owners alternate between the two players.
    \end{enumerate}
\end{definition}

An arena will correspond to a type. The inhabitants of that type, which will be  functions if the type is a functional type $X \to Y$, will be described by strategies in the arena. Such a strategy tells us how player System should react to the moves of player Environment. Intuitively speaking, in the case of a functional type, a strategy will say how the function reacts to requests in the output type and values in the input type. We will only be talking about strategies for player System, so from now on, all strategies will be for player System. The following definition corresponds to the definition from \cite[p.5]{abramsky2013semantics}:

\begin{definition}[Strategy]
    A strategy  in an arena  is a subset of plays in the arena, which: 
    \begin{enumerate}
        \item is closed under prefixes;
        \item\label{item:sys-ext} if the strategy contains a play $p$ that ends with a move owned by player System, then it also contains all possible plays  in the arena that extend $p$ with one move of player Environment;
        \item\label{item:env-ext} if the strategy contains a play $p$ that ends with a move owned by player Environment, then it contains exactly one play  in the arena that extends $p$ with one move of player System;
         \item there is some $k$ such that all plays in the strategy have length at most $k$;
        \item every ``read'' move is directly preceded by a ``write'' move (in particular a play cannot begin with ``read''), and every ``write'' move is either the last move, or directly succeeded by a ``read'' move.
    \end{enumerate}
\end{definition}

Conditions \ref{item:sys-ext} and \ref{item:env-ext}, which are standard in game semantics,  guarantee that the strategy only ``ends'' when 
Environment has no moves to play.  Let us now comment on the last two conditions, which are not standard.

The fourth condition is motivated by the idea that we study ``finite'' types,  and there will be no need for unbounded computations. 

The last condition will be called the  \emph{immediate read condition}. It ensures that there is matching between ``read'' and ``write'' moves in plays that do not end with write. Since ``write'' will always be owned by Environment, the immediate read condition will ensure a matching between ``write'' and ``read'' moves.

We now show how to associate to each linear type a corresponding arena, and also how to associate an arena to a function type $X \to Y$. This definition will be compositional, i.e.~it will arise through operations on arenas that correspond to the type constructors such as $1$ or $\otimes$. The arenas that we will construct so far will not be our final proposal, since the corresponding strategies will not be able to use constants or perform equality tests. This will be fixed in Section~\ref{sec:arenas-with-constants-and-equality-tests}, where a more complex arena will be defined for function types. Before giving formal definition, we begin with a simple example of the arena for the type $\atoms \to \atoms$.
\begin{example}\label{ex:identity-function-without-equality-tests-and-constants}
    We define an arena for the type $\atoms \to \atoms$. This arena will be rather impoverished, since the  only allowed strategy in it will correspond to the identity function. However, this is consistent with the stage that we are at, where we only consider functions that do not use constants or perform equality tests; for such functions the only possibility is the identity function.

     The arena will describe the following interaction between the two players: Environment  requests an output,  then System requests an input, then Environment grants the input, and finally player System grants the output by forwarding the input grant. 
    The arena is shown in the following picture: 
    \mypic{6}
    The methodology of drawing this picture will become clearer later on, as we define operations on arenas such as $\otimes$. For the moment we describe the arena without caring that it arises as a special case of some general construction. The arena has four moves: 
    \begin{center}
        \begin{tabular}{lll}
         move & owner & register operation \\
            \hline
            request output & Environment & none \\
            request input & System & none \\
            grant input & Environment & write \\
            grant output & System & read 
        \end{tabular}    
    \end{center}
    The set of plays is defined as follows. These are all sequences that begin with a move of player Environment, alternate between players, use each move only once, and have the following condition:  ``grant output'' can only be played after ``request output'', and likewise for ``grant input'' and ``request input''.  

    A quick inspection of the above definition reveals that the arena has a unique maximal play, where the moves are played in the order from the table, and all other plays are prefixes of this maximal play. Because of the uniqueness of responses, the set of plays is also a strategy. As mentioned at the beginning of this example, the strategy describes the identity function.  \exampleend
\end{example}

We hope that the above example explains some intuitions about how arenas describe types and strategies describe functions. We now give a formal definition. As mentioned before, this definition is compositional: we define arenas for the basic types $1$ and $\atoms$, and then we define operations on arenas that correspond to the type constructors  $+$, $\&$, and $\otimes$. We begin with the basic types.

\begin{definition}[Arenas for $1$ and $\atoms$]\label{def:arenas-without-atoms-or-functions} \ 
    \begin{enumerate}
        \item     The arena for the type $1$ is empty: there are no moves and the only play is the empty sequence. 
        \item The arena for type $\atoms$ has two moves, which must be played one after the other: first player Environment makes a move called ``request'' that has no register operation, and then player System responds with  a move called ``grant'' that has register operation ``read''.
    \end{enumerate}
\end{definition}

In the above definition, we only described the behaviour of $\atoms$ when viewed as an output type. To get the input type, where the players are swapped and read is swapped with write, we will use duality, which is another operation on arenas. This operation, together with other operations that correspond to the type constructors, are defined below.   

\begin{definition}[Operations on arenas]\label{def:composition-of-arenas}
    Let  $A$ and $B$ be arenas. We define the following arenas (see also Figure~\ref{fig:arena-constructors}):
        \begin{description}
            \item[$A+B$] The moves in this arena are the disjoint union of the moves of $A$ and $B$, with inherited owners and register operations, plus three extra moves: ``ask'' owned by Environment, and ``left'', ``right'' owned by System. 
            The plays are defined as follows. Player Environment begins with  an ask move, then System responds with a left or right move, and the remaining sequence is a play in the arena $A$ or $B$, depending on whether System chose left or right. 
            \item[$A \& B$] The moves in this arena are the disjoint union of the moves of $A$ and $B$, with inherited owners and register operations, plus three extra moves: ``acknowledge'' owned by System, and ``left'', ``right'' owned by Environment. 
            The plays are defined as follows. Player Environment begins by choosing left or right, then player System responds with an acknowledge move, and the remaining sequence is a play  in the arena $A$ or $B$, depending on whether Environment chose left or right.
            (This construction differs slightly from the one from \cite[Excercise~1.10]{abramsky2013semantics} -- this is because we want to keep it analogous to the construction for $A + B$.)
            \item[$A \otimes B$] The moves in this arena are the disjoint union of the moves of $A$ and $B$, with inherited owners and register operations. A play in this arena is any shuffle of plays in the two arenas $A$ and $B$. (A shuffle of two words is any word obtained by interleaving them, e.g.~shuffles of ``abc'' and ``123'' include ``a1b23c'' and ``12ab3c'').
            By Definition~\ref{def:arenas-without-atoms-or-functions}, we require that the owners of the move alternate in the 
            interleaved sequences. (This construction is based on \cite[p.7]{abramsky2013semantics}.)

            \item[$\bar A$] This is called the dual arena of $A$. The moves and plays are the same as in $A$, except the owners are swapped, and the ``read'' and ``write'' register operations are swapped.
        \end{description} 
\end{definition}

\begin{figure}
The arena $A+ B$:
            \mypic{7}
The arena $A \& B$: 
            \mypic{8}
The arena $A \otimes B$:
            \mypic{9}
    \caption{\label{fig:arena-constructors} Pictures of the operations on arenas. The picture for $\otimes$  is less useful than previous two, since the root node of the tree is not a player, but a node labelled by $\otimes$. The intuition is that the game is played in parallel on both arenas, and therefore a position in it can be visualized as a pair of positions in the two arenas.}
\end{figure}

Equipped with the above definitions, we  present our first attempt at assigning arenas to types. In the second item of the following definition, we use the name \emph{library-less}, because our final definition of the arena for a function type, as presented in the next section, will be equipped with an extra feature that will be called a library. 

\begin{definition} Let $X$ and $Y$ be linear types.
    \begin{itemize}
        \item The \emph{arena for  $X$} is defined by inductively applying the constructions from Definition~\ref{def:arenas-without-atoms-or-functions} and Definition~\ref{def:composition-of-arenas} according to the structure of the type.
        \item The \emph{library-less arena for $X \to Y$} is defined to be (dual of arena of $X$) $\otimes$ (arena of $Y$).
    \end{itemize}
\end{definition}

As discussed previously, our notion of arenas does not yet take into account the structure of the atoms, i.e.~the constants and equality tests. This will be fixed in the next section, by modifying the second item in the above definition. On the other hand, the arenas from that first item in the above definition, for linear types without function types, are already in their final form. 

In principle the construction from the second item in the above definition can be nested, and thus used to assign arenas to higher order types that can nest $\to$ with the other type constructors. This is how it is usually done in linear logic. However,  the construction that we will describe in the next section will  be less amenable to nesting, and will use it only  to describe functions between types that do not use $\to$.

\begin{example}
    The arena for the identity type $\atoms \to \atoms$ is the same as the arena from Example~\ref{ex:identity-function-without-equality-tests-and-constants}.
\end{example}

\subsection{Arenas and strategies with constants and equality tests}
\label{sec:arenas-with-constants-and-equality-tests}
In the Section~\ref{sec:arenas-without-constants-and-equality-tests}, we described arenas for functions that did not use the structure of the atoms, i.e.~constants and equality tests. We now show how these arenas can be extended to cover this structure. The general idea is to equip the arenas with an extra part, which we call the \emph{library},  that describes the allowed operations on the atoms. (The library as we present it here only contains functions for equality and constants, but in the proof of Theorem~\ref{thm:single-use-automata-relational-structures}, we can use a library that has other relations beyond equality. )

\begin{definition}[The library arena] \label{def:library-arena} The library arena and its parts are defined as follows, see Figure~\ref{fig:library-arenas} for pictures.
    \begin{enumerate}
        \item The \emph{constant choice arena} is the following arena $\atoms+1$ moves:
        first player System chooses an atom, then player Environment plays move with register operation ``write''. 
        \item The \emph{equality test arena} is an arena which the plays are as follows:
    \begin{enumerate}
        \item first player System plays a move with register operation ``read'';
        \item then player Environment plays an move with no register operation;
        \item then player System plays a move with register operation ``read'';
        \item then player Environment plays one of two moves, called $=$ and $\neq$, with no register operation.
    \end{enumerate}
    \item The \emph{library arena} is defined to be an arena that is obtained by applying $\otimes$ to infinitely many copies of the constant choice arena and infinitely many copies of the  equality test arena.
    \end{enumerate}
\end{definition}

\begin{figure}
The constant choice arena:
        \mypic{10}
The equality test arena:
    \mypic{11}
\caption{\label{fig:library-arenas} Pictures of the library arenas.  We use the convention that the register operations are in red, and the names of the moves, which have no other role than to distinguish them, are in black. 
Note that the first move in this arena is owned by System, and we assume in Definition~\ref{def:arena} that the first move is owned by Environment. This is because this arena, like all arenas in Definition~\ref{def:library-arena}, is not intended to be a stand-alone arena, but only as part of the bigger arena from Definition~\ref{def:arena-for-function-type} where the first move is indeed owned by Environment. } 
\end{figure}
       
The library arena is infinite. Taking the tensor product of infinitely many copies of the two arenas ensures that the library arena satisfies the following property, which corresponds to the $!$ operation from linear logic: 
\begin{align}\label{eq:bang-library-arena}
\text{library arena} 
\quad \equiv \quad 
\text{(constant choice arena)} \otimes 
\text{(equality test arena)} \otimes
 \text{(library arena)}.
\end{align}
In the above, $\equiv$ refers to isomorphism of arenas, which is defined in the natural way: this is a bijection between the moves, which is consistent with the owners, register operations and plays in the expected way.  Another property is that the library arena is isomorphic to a tensor product of itself: 
\begin{align}\label{eq:library-arena-isomorphism}
\text{library arena}
\quad \equiv \quad
\text{library arena} \otimes \text{library arena}.
\end{align}

We are now ready to give the final definition of arenas for functions between linear types, which takes into account the structure of the atoms.

\begin{definition}[Arena for a function type]\label{def:arena-for-function-type} For linear types $X$ and $Y$, the arena of $X \to Y$ is 
    \begin{align*}
    \text{(library arena)} \otimes \text{(dual arena of $X$)} \otimes \text{(arena of $Y$)}.
    \end{align*}
\end{definition}

This completes the game semantics of linear types and functions between them. We do not intend to give game semantics for higher order types, such as functions on functions etc. As a result, we will only be using the dual once, namely for the arena of the input type. Also, note that the read/write operations will be used in a restricted way, as announced in Footnote~\ref{footnote:read-write}, namely that the ``read'' moves will be owned by System and the ``write'' moves will be owned by Environment.  This is because the library arena has this property, the arena for $\atoms$ has this property, and all operations on arenas that we have defined preserve this property.

\subsubsection{Composition of strategies}
\label{sec:composition-of-strategies}
One of the main points about strategies in game semantics is that they can be composed.
The usual construction for the function type $X \to Y$ is to take the tensor product of the dual arena for $X$ and the arena for $Y$. 
Our construction is a bit more involved, since the arena that we use has a copy of the library arena. We now explain how to compose strategies in a way that accounts for the library arena. 

We begin by describing the usual construction for composing strategies, which we call \emph{shuffling and hiding}, see \cite[p.12]{abramsky2013semantics}, with a minor adaptation to our setting that has register operations. 

\begin{definition}[Shuffling and hiding]\label{def:shuffling-and-hiding}
    Let $A, B, C$ be arenas, and consider two strategies
    \begin{align*}
    \sigma_1 \text{ in the arena $A \otimes \bar B$}
    \qquad
    \sigma_2 \text{ in the arena $B \otimes C$}.
    \end{align*}
    The shuffling and hiding strategy for  $\sigma_1$ and $\sigma_2$, which is a strategy  in the  arena $A \otimes C$, is defined to be the set of plays    $p$  such that there exist plays $p_1 \in \sigma_1$ and  $p_2 \in \sigma_2$  with the following properties: 
\begin{itemize}
    \item the play $p$ satisfies the immediate read condition;
    \item the following two sequences of moves are equal: 
    \begin{enumerate}
        \item the subsequence of moves in $p_1$ that are from the arena $\bar B$;
        \item the subsequence of moves in $p_2$ that are from the arena $B$.
    \end{enumerate}
\end{itemize}
\end{definition}

Note that the set of moves in the arenas $\bar B$ is the same as the set of moves in the arena $B$ (the owners and register operations are changed),  and therefore the subsequences in items 1 and 2 above can be meaningfully compared.  The following lemma, whose straightforward proof is left to the reader,  shows that the above definition is well-formed.

\begin{lemma}
    The strategy described in Definition~\ref{def:shuffling-and-hiding} is a valid strategy.
\end{lemma}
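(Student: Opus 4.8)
The plan is to verify, one at a time, the five conditions in the definition of a strategy for the set of plays $\sigma$ produced by Definition~\ref{def:shuffling-and-hiding}. The convenient bookkeeping device is the \emph{interaction sequence}: given witnesses $p_1 \in \sigma_1$ (in $A \otimes \bar B$) and $p_2 \in \sigma_2$ (in $B \otimes C$) whose restrictions to the $B$-moves agree, one assembles the essentially unique shuffle $u$ of $p_1$ and $p_2$ that synchronises these $B$-moves; the hidden play $p$ is then $u$ with all $B$-moves deleted. I would first record the standard ``zipper'' property of such a $u$, proved by induction on its length using that $\sigma_1$ and $\sigma_2$ are themselves alternating: the $B$-moves of $u$ occur in maximal blocks, each block flanked by a single move visible in $A \otimes C$, so that the restriction of $u$ to the $A$- and $C$-moves is genuinely alternating between the two players and hence really is a play of $A \otimes C$ in the sense of Definition~\ref{def:arena}.

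With this in hand, the routine conditions go quickly. \emph{Prefix-closure}: cut a witnessing $u$ at the last move of a prefix $p'$ of $p$; the two restrictions still lie in $\sigma_1$ and $\sigma_2$ respectively (both prefix-closed), and deleting a suffix of a sequence can neither create a ``read'' lacking a preceding ``write'' nor orphan a ``write'' (a ``write'' is permitted to be the last move), so $p'$ still satisfies the immediate read condition and lies in $\sigma$. \emph{Bounded length}: every move-occurrence of $u$ occurs in $p_1$ or in $p_2$, so $|p| \le |u| \le |p_1| + |p_2| \le k_1 + k_2$, where $k_i$ bounds the lengths of plays in $\sigma_i$. \emph{Immediate read condition}: it is built into the definition of $\sigma$, so there is nothing to check.

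The substance lies in the remaining two conditions. For condition~\ref{item:sys-ext} (if $p$ ends with a System move, every one-move Environment extension in $A \otimes C$ lies in $\sigma$): such an extending move $m'$ is an Environment move of $A$ or of $C$, hence an Environment move of the arena of $\sigma_1$ or of $\sigma_2$; using the zipper property and the ``switching'' discipline of these linear-logic arenas (control stays inside one component between its visible moves) one checks that the corresponding component play ends with a System move, so condition~\ref{item:sys-ext} for that component yields $p_1 \cdot m'$ or $p_2 \cdot m'$; extending $u$ accordingly witnesses $p \cdot m' \in \sigma$, and appending $m'$ cannot break the immediate read condition, since the move just before $m'$ is System-owned and hence not a ``write'' (writes are always Environment-owned in our arenas). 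For condition~\ref{item:env-ext} (if $p$ ends with an Environment move, there is exactly one one-move System extension in $\sigma$): run the induced internal dialogue, in which the component that just received a move responds --- uniquely, by condition~\ref{item:env-ext} for $\sigma_1$ or $\sigma_2$ --- and either the response is visible, in which case we stop, or it is a $B$-move that is handed to the other component, which responds in turn, and so on. This process is deterministic, has no livelock (by the fourth, boundedness condition on $\sigma_1,\sigma_2$) and no deadlock (at each step exactly one component is owed a move and, being total, supplies it), so it terminates with a unique visible System move; appending this move to $p$ gives the unique one-move System extension of $p$ inside $\sigma$, and one last check confirms that it still meets the immediate read condition.

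I expect the main obstacle to be precisely the interaction between composition and the register discipline in condition~\ref{item:env-ext}: one must make sure that the matching of ``read'' with ``write'' that holds inside $p_1$ and inside $p_2$ survives the deletion of the $B$-moves, so that the System move freshly produced by the internal dialogue cannot leave a ``read'' with no ``write'' directly in front of it, and that no ``deadlock'' of this kind can arise --- which is exactly why the immediate read condition was folded into the definition of $\sigma$ in the first place. Checking that folding it in does not destroy totality (condition~\ref{item:env-ext}) is the real content; everything else is the textbook shuffling-and-hiding bookkeeping of linear-logic game semantics (cf.~\cite[p.12]{abramsky2013semantics}), now carrying along the extra ``none/read/write'' annotation on moves.
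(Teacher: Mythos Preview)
The paper does not actually prove this lemma; it is explicitly ``left to the reader'' as straightforward. Your write-up is correct and far more detailed than anything the paper provides. The interaction-sequence bookkeeping you use is the standard route from the Abramsky notes the paper cites, and you have correctly isolated the one genuinely new ingredient here, namely the read/write register annotations and the immediate-read condition folded into the definition of the composite.

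One small remark on your argument for condition~\ref{item:sys-ext}: you invoke the ``switching discipline'' to conclude that when $p$ ends with a visible System move, both component plays $p_1$ and $p_2$ are in a position where condition~\ref{item:sys-ext} for $\sigma_1,\sigma_2$ applies. That conclusion is right, but switching (in the technical only-Opponent-may-change-component sense) is not what delivers it; it already follows from the three-state analysis of any interaction sequence between alternating strategies (states ``external to move'', ``$\sigma_1$ to move'', ``$\sigma_2$ to move''), which needs nothing beyond alternation. Since Definition~\ref{def:shuffling-and-hiding} is stated for arbitrary arenas $A,B,C$, this phrasing is the safer one.

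Your final paragraph identifies precisely the place where the argument is not purely off-the-shelf: one must check that the visible System move produced by the internal dialogue has the right register operation, so that the immediate-read clause does not kill totality. The mechanism is that duality on $B$ swaps ``read'' and ``write'', so a ``write'' entering the hidden dialogue is consumed as a ``read'' by one component and, if forwarded, re-emitted as a ``write'' to the other; by induction the chain surfaces as a visible ``read''. Dually, a ``none'' chain surfaces as a visible ``none''. You allude to this but stop at ``one last check confirms''; spelling it out as above would close the only place a reader might hesitate.
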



We can now use shuffling and hiding to compose strategies in the arena for a function type that were defined in Definition~\ref{def:arena-for-function-type}.
Consider three linear types $X, Y, Z$, and two strategies, in the arenas for $X \to Y$ and $Y \to Z$, respectively.
 By unfolding the definition of arenas from Definition~\ref{def:arena-for-function-type}, these are strategies in the arenas 
\begin{align*}
\text{(library arena)} \otimes \overline{\text{arena for $X$}} \otimes \text{arena for $Y$} & \qquad \text{the arena for $X \to Y$, and}
\\
\text{(library arena)} \otimes \overline{\text{arena for $Y$}} \otimes \text{arena for $Z$} & \qquad \text{the arena for $Y \to Z$}.
\end{align*}
Using the construction from Definition~\ref{def:shuffling-and-hiding}, we can combine them into a single  strategy in 
\begin{align*}
\text{(library arena)} \otimes  \text{(library arena)}\otimes \overline{\text{arena for $X$}} \otimes \text{arena for $Z$}.
\end{align*}
Since the library arena is isomorphic to a tensor product of two copies of itself, the above strategy gives us a strategy in the arena 
\begin{align*}
    \text{(library arena)} \otimes   \overline{\text{arena for $X$}} \otimes \text{arena for $Z$} & \qquad \text{the arena for $X \to Z$.}
    \end{align*}

This strategy is defined to be the \emph{composition} of the original two  strategies. We write $\sigma; \tau$ for this composition. Thanks to \cite[Proposition~1.2]{abramsky2013semantics}, we know that the the usual
composition of library-free strategies is associative. It follows that our composition of library strategies is associative 
up to isomoprhism.

\subsection{Strategies as single-use functions}
In this section, we explain how a strategy in a function type $X \to Y$ can be interpreted as a single-use function from the set  $\sem X$ to the set $\sem Y$. This interpretation is partial, since some strategies will be considered inconsistent, and will not represent any function. We begin with an example which explains how  a strategy might be inconsistent. 

\begin{example}[Inconsistent strategy] \label{ex:inconsistent-strategy} Consider a strategy in the arena for  $1 \to \atoms \otimes \atoms$. We would like to think of this strategy of producing a pair of atoms. 
     However, an inconsistent strategy for player System could use different constants from the constant choice arena  depending on the order in which Environment requests the two output atoms in $\atoms \otimes \atoms$.  For example, the output could be (John, Mark) for one order, and (Eve, Mary) for some other order.  \exampleend
\end{example}

Below, we will formalize the intuition from the above example. We begin by developing some notation for describing plays.

\subsubsection{Some notation for describing plays}
\label{sec:some-notation-for-describing-plays}
Consider a move in the arena associated to a function type $X \to Y$. There are three kinds of moves here: moves from the library, moves from $X$, and moves from $Y$. Depending on where the moves come from, we will associate to each move an \emph{origin} as follows. 
\begin{itemize}
    \item In the case of moves which come from the library, define the  {origin} to be the copy of the constant choice arena or the equality test arena from which the move is taken.
    \item   In the case of moves which come from either $X$ or $Y$, define the origin to be the corresponding node in the syntax tree of $X$ or $Y$. Here, the syntax tree of a type is defined in the obvious way, see Figure~\ref{fig:syntax-tree}.
\end{itemize}

\begin{figure}
    \mypic{12}
    \caption{\label{fig:syntax-tree} The syntax tree of the type $((\atoms +1) + 1) \otimes ((\atoms + \atoms) \& \atoms)$.}
\end{figure}

Consider a play in the arena for $X \to Y$. Define a \emph{library call} to be a subsequence of the play that is obtained by choosing some copy of the constant choice arena or the equality test arena, and then returning all moves in the play that originate from this copy. A library call will have at most two moves if it originates from some copy of the constant choice arena (we use the name \emph{constant call} for such library calls), and at most four moves if it originates from some copy of the equality test arena (we use the name \emph{equality call} for such library calls). For each constant call, there is an associated at atom, which is the first move in the call. 

Consider a read move in the play. By the immediate read condition, this move is directly preceded by a write move. The write move will either originate from a leaf labelled by $\atoms$ in  the syntax tree of the input type $X$, or it will originate from some constant call. In the latter case, it will have a corresponding atom. Therefore, to each read move in a play we can associate either some leaf labelled by $\atoms$ in the syntax tree of the input type, or some individual atom constant. The associated object will be called the \emph{read value} of the read move.

Equipped with the above notation, we are ready to define which strategies  can  be interpreted as values or functions.

\subsubsection{Strategies in $1 \to X$ as values in $\sem X$}
\label{sec:strategies-in-1-to-x-as-elements-of-sem-x}
  We begin by describing strategies which represent values in the underlying set $\sem X$ of a linear type $X$, and then we lift this interpretation to functions.  To represent such values, we  use strategies not in the arena for $X$, but in the arena for $1 \to X$. This is because the second arena contains the library, which is needed to  produce  atom constants.

  Let $X$ be a linear type, and consider a strategy $\sigma$ in the arena for $1 \to X$. 
  In order to represent some value, this strategy must satisfy certain conditions that avoid the bad behaviour  from Example~\ref{ex:inconsistent-strategy}.

\begin{definition}[Strategy that represents a value]\label{def:consistency}
    Let $X$ be a linear type and consider a  strategy in  the arena for $1 \to X$. We say that the strategy uses a node from the syntax tree of $X$ if at least one play in the strategy uses a move that originates from this node. We say that the  strategy    \emph{represents a value} if it satisfies the following conditions: 
    \begin{enumerate}
        \item Consider a node in  the syntax tree of $X$ that is labelled by co-product $+$. Then at most one child of this node is  used by the strategy. 
        \item Consider a node  in the syntax tree of $X$ that  is labelled by $\atoms$. Then there is some atom $a$ such that for every play in the strategy, if the play contains a read move that originates from the node, then the read value of this move is $a$. 
    \end{enumerate}
\end{definition}

For a strategy that satisfies the two conditions in the above definition, we can also define the value $x \in \sem X$ that it represents.  This is done as follows.  For every node  in the syntax tree  of $X$ which is used by the strategy, we assign a value, which is called the value of the strategy at the node. This value at a node is in the set $\sem Y$, where $Y$ is the linear type that corresponds to the subtree of the node.  The definition of this value at a given node  is defined using a straightforward induction on the size of its subtree: 
\begin{enumerate}
    \item the value at a leaf labelled by $1$ is the unique element of $\sem 1$;
    \item the value at a leaf labelled by $\atoms$ is the atom from the second item in Definition~\ref{def:consistency};
    \item the value at a node labelled by $\otimes$ or $\&$ is the pair of values at the children;
    \item the value at a node labelled by $+$ is the value at  the unique child in the first item in Definition~\ref{def:consistency}.
\end{enumerate}
The above definition is well-formed, since one can easily see from the definition of strategies that if a node labelled by $+$ is used by the strategy, then at least one of its children is used, and if a node labelled by $\otimes$ or $\&$ is used, then both of its children are used. Finally, the value represented by a strategy is defined to be the value that is represented at the root node of the syntax tree. 

One can easily see that every value in the set $\sem X$ is represented by at least one strategy. This will also follow from a more general result, Lemma~\ref{lem:all-functions-represented} below.

\subsubsection{Strategies in $X \to Y$ as single-use functions}
We now move to the main point of this section, which is to show how to interpret a strategy in the arena for $X \to Y$ as a single-use function of type $X \to Y$. This is done by using the definition from for values in the previous section, and then lifting it to functions by using composition of strategies that was defined in Section~\ref{sec:composition-of-strategies}.

\begin{definition}\label{def:strategy-as-single-use-function}
    Let $X$ and $Y$ be linear types, and let $\sigma$ be a strategy in the arena $X \to Y$. We say that $\sigma$ \emph{represents} a function  
    \begin{align*}
    f : \sem{X} \to \sem{Y}
    \end{align*}
    if for every $x \in \sem X$, if  $x$ is the value  represented by some strategy $\tau$ in the arena for $1 \to X$, then the composition strategy $\tau;\sigma$ in the arena for $1 \to Y$ represents a value, and this value is equal to $f(x)$.
\end{definition}

Since each strategy in $1 \to X$ or $1 \to Y$ represents at most one value, it follows that a strategy can represent at most one function. This function is called the \emph{function represented by the strategy}.  We are now ready to state the main result of this section, which is that every single-use function is represented by at least one strategy.
\begin{lemma}\label{lem:all-functions-represented}
    Let $X$ and $Y$ be linear types, then every single-use function of type $X \to Y$ is represented by at least one strategy. 
\end{lemma}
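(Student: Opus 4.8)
The plan is to argue by induction on the way a single-use function $f : \sem X \to \sem Y$ is generated: from the prime functions of Tables~\ref{fig:prime-morphisms-without-with} and~\ref{fig:prime-morphisms-with-with} by composition and by the combinators for $+$, $\otimes$ and $\&$, exhibiting a strategy at each stage and checking through Definition~\ref{def:strategy-as-single-use-function} that it represents the corresponding function. For the base case, the structural primes -- projections, co-projections, co-diagonal, diagonal, the associativity and commutativity isomorphisms, and the distributivities -- are represented by copycat-style strategies, in which System answers each Environment move by the matching move in the dual component and never touches the library; these are the standard strategies of multiplicative-additive intuitionistic linear logic transported to our arenas, and verifying the strategy axioms (including the immediate read condition, nontrivial only because of the ``read'' move sitting below each $\atoms$-leaf) and that they compute the stated set map is routine. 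The three primes that use the atoms are handled with the library: $\text{constant-}a : 1 \to \atoms$ uses one copy of the constant choice arena, with System playing $a$; the equality test $\atoms \otimes \atoms \to 1+1$ uses one copy of the equality test arena, feeding the two input atoms into its two ``read'' moves and translating its $=$/$\neq$ answer into the $\mathrm{left}$/$\mathrm{right}$ move of the output; and $\text{identity} : \atoms \to \atoms$ is copycat, with the ``write'' on the input leaf immediately followed by the ``read'' on the output leaf.

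\textbf{Composition.} Suppose $\sigma$ represents $f : \sem X \to \sem Y$ and $\tau$ represents $g : \sem Y \to \sem Z$; the claim is that $\sigma;\tau$ represents $g \circ f$. Fix $x \in \sem X$ and a strategy $\rho$ in the arena for $1 \to X$ representing $x$. Applying Definition~\ref{def:strategy-as-single-use-function} to $\sigma$, the strategy $\rho;\sigma$ represents the value $f(x)$; applying it again to $\tau$, the strategy $(\rho;\sigma);\tau$ represents the value $g(f(x))$, and in particular it represents a value. Since strategy composition is associative up to isomorphism of arenas (as recorded after Definition~\ref{def:shuffling-and-hiding}, via \cite[Proposition~1.2]{abramsky2013semantics} together with the library isomorphisms \eqref{eq:bang-library-arena} and \eqref{eq:library-arena-isomorphism}), the strategies $(\rho;\sigma);\tau$ and $\rho;(\sigma;\tau)$ agree up to renaming of moves, so $\rho;(\sigma;\tau)$ represents a value and that value is $g(f(x))$. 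As $x$ and $\rho$ were arbitrary, $\sigma;\tau$ represents $g \circ f$.

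\textbf{Combinators.} Given strategies $\sigma_i$ representing $f_i : \sem{X_i} \to \sem{Y_i}$ for $i \in \{1,2\}$ and $\star \in \{+,\otimes,\&\}$, one builds a strategy $\sigma_1 \star \sigma_2$ in the arena for $X_1 \star X_2 \to Y_1 \star Y_2$. Unfolding Definition~\ref{def:arena-for-function-type} and using the isomorphism between the library arena and two copies of itself, one splits off a private library for each $\sigma_i$; then for $\otimes$ the dual of $X_1 \otimes X_2$ is the tensor of the two dual arenas and $\sigma_1 \otimes \sigma_2$ is the usual tensor of strategies, running $\sigma_1$ and $\sigma_2$ on their respective groups of components with interleaving driven by Environment; for $\&$ (resp.\ $+$), Environment's initial side choice in the output $Y_1 \& Y_2$ (resp.\ $Y_1 + Y_2$) is answered by System selecting the same side in the input and then playing $\sigma_i$. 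Composing with a value-strategy $\rho$ for $x = (x_1,x_2)$, resp.\ $x = \mathrm{left}(x_1)$ or $x = \mathrm{right}(x_2)$, one checks that $\rho;(\sigma_1 \star \sigma_2)$ represents a value, namely $(f_1(x_1),f_2(x_2))$, resp.\ $\mathrm{left}(f_1(x_1))$ or $\mathrm{right}(f_2(x_2))$, which is exactly $(f_1 \star f_2)(x)$; together with the base case and the composition clause this closes the induction.

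\textbf{Main obstacle.} The real work is the bookkeeping around the library arena. Every clause above uses the isomorphisms \eqref{eq:bang-library-arena} and \eqref{eq:library-arena-isomorphism} to duplicate or merge library copies, and to turn ``associativity up to isomorphism'' into a genuine statement about represented functions one has to check these isomorphisms are sufficiently coherent -- so that, for instance, rebracketing a triple composition does not permute the copies of the constant choice and equality test arenas in a way that changes which atom is read at which point. A second, smaller, point is that ``$\sigma$ represents $f$'' is not a structural property of $\sigma$ alone but is phrased through composition with all value-strategies in $1 \to X$, so every inductive step must also preserve consistency (``represents a value''); this is precisely why it is cleaner to route all the arguments through Definition~\ref{def:strategy-as-single-use-function} rather than reasoning directly about the plays of $\sigma_1 \star \sigma_2$.
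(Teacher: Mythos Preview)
Your proposal is correct and follows essentially the same approach as the paper's proof: induction on the derivation of the single-use function, with copycat-style strategies for the structural primes, library-using strategies for the equality test and constants, associativity of strategy composition (up to the library isomorphisms) for closure under $\circ$, and the natural componentwise constructions for the three combinators. One small slip worth fixing: in the $+$ combinator, Environment's opening move in the output $Y_1 + Y_2$ is ``ask'' and it is \emph{System} who plays left/right there; the side information actually originates from Environment's choice in the dual of the input $X_1 + X_2$, which System then forwards to the output---this is the order the paper spells out, and it matters because otherwise System would be guessing the disjunct before seeing the input.
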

\begin{proof}
    It suffices to show that all single-use prime functions can be represented as strategies,
    and  to lift the combinators $\circ$, $\otimes$, $\times$, $\&$ from functions to 
    strategies in a way that preserves the semantics. Let us start with the combinators:
    \begin{description}
        \item[$f; g$] We compose the corresponding strategies. The correctness of this construction is witnessed by the following claim. 
\begin{claim}\label{claim:composition-of-strategies-and-functions}    Let $X, Y, Z$ be linear types, and 
    let $\sigma_1$ and $\sigma_2$ be strategies in the arenas for $X \to Y$ and $Y \to Z$, respectively. If both of these strategies represent functions, then  $\sigma_1;\sigma_2$ also represents a function, and this function is equal to the composition of the functions represented by $\sigma_1$ and $\sigma_2$.
\end{claim}
\begin{proof}
    Let $\tau$ be some value consistent strategy in the arena for $1 \to X$. By associativity of strategy composition, we know that the strategies 
    \begin{align*}
    \tau;(\sigma_1;\sigma_2) \quad \text{and} \quad (\tau;\sigma_1);\sigma_2
    \end{align*}
    are equal, up to an isomorphism of the library arena. Since such an isomorphism does not affect value consistency or the represented value, either both or none are value consistent, and if they are value consistent, then they represent the same value. The right strategy is value consistent by the assumption of the lemma, and therefore the same is true for the left strategy, and the represented values are equal.
\end{proof}

        \item[$f_1 + f_2$] Suppose that functions $f_1$ and $f_2$ are represented by strategies $\sigma_1$ and $\sigma_2$. We join these two strategies into a new strategy $\sigma$ as follows.  Player Environment begins, and the only move available is to ``ask'' for one of the two disjuncts of the output type, which is the co-product of the output types of $f_1$ and $f_2$. Player System responds with a corresponding question on the input type. The only choice for player Environment is to select either ``left'' or ``right'' in the input type. Player System forwards this choice to the output type, and then uses the corresponding strategy.  Every play in this strategy has the following form: it begins with four moves that select a disjunct, and then it continues with a play in the strategy $\sigma_1$ or $\sigma_2$, depending on  the appropriate disjunct. From this property, one can show that the strategy represents the function $f_1 + f_2$.
        \item[$f_1 \& f_2$]  The construction is defined analogously to the one for $f_1 + f_2$.
        \item[$f_1 \otimes f_2$] Again, we use an analogous construction, in which player System plays in parallel the two strategies representing the functions $f_1$ and $f_2$. The corresponding property, which ensures that the strategy represents the function $f_1 \otimes f_2$, is that each play can be split into a shuffle of two plays, one for the strategy representing $f_1$ and one for the strategy representing $f_2$.
    \end{description}
    Let us now deal with prime functions. For the sake of brevity we only show the strategies for selected prime functions. The remaining prime functions can be implemented without using  the library, and are well known strategies corresponding to tautologies of affine logic:
    \begin{description}
        \item[Equality test of type $\atoms \otimes \atoms \to 1 + 1$]:
        For this function, System uses the library function for comparing atoms 
        and returns the result. The library is designed to implement this function, so the construction is straightforward, but we describe it here for the convenience of the reader:
\begin{itemize}
    \item Environment begins by  playing ``ask'' in $1+1$, this is the only choice
    \item System  plays  ``request'' in the left $\atoms$.
    \item Environment replies with ``grant'' which is a write move, this is the only choice.
    \item System plays the read move in atoms equality library function.
    \item Environment acknowledges, this is the only choice.
    \item System plays  ``request'' in the right $\atoms$.
    \item Environment plays ``grant'', this is the only choice.
    \item System plays the read move in the atoms equality function.
    \item Environment replies with either $=$ or $\neq$ in the library, these are the only choices.
    \item System forwards this choice to the output type.
\end{itemize}

        \item[Constant function of type $1 \to \atoms$] As in the previous item, this function is implemented by using the library, in this case the constant choice part of the library. 

        \item[Projection of type $ X \otimes Y \to X$] System uses the  shuffling and hiding strategy that was described in Section~\ref{sec:composition-of-strategies}. 

        \item[Distributivity  of type $X \otimes (Y \& Z) \to (X \otimes Y) \& (X \otimes Z)$] The  strategy from Example~\ref{ex:amp-otimes-distr}.  
    \end{description}
\end{proof}
One could also show that all represented functions are single-use. We omit this proof, since we will prove a stronger result in the next section. 

\subsection{The set of strategies as a linear type}
The purpose of this section is to show that the set of strategies in an arena for a function type $X \to Y$ can be described using some linear type, and furthermore the relevant operations on strategies, such as application and Currying, can be performed in a single-use way.

Before we do this we introudce the idea of a \emph{partially applied arena}.
Let $A$ be an arena, and $m$ a move. We define the partially applied arena ${m^{-1}}A$ 
to be set of sequences that follow $m$ in $A$, i.e:
\[m^{-1}A = \{s \ | \ ms \in A \}.\]
The partially applied arena also remembers two bits of information about its context: (a) wheather the current move belongs to the Environment or to the System
(b) wheter the previous move was a write move (so that we know if the next move has to be a write move). We finish our discussion with the observation 
that if moves $a, b$ belong to $\atoms$ (i.e. they are constant requests from the library), then $a^{-1}A = b^{-1}A$. 
For this reason we define the notation $\atoms^{-1}A$ for partially applying any of the atomic moves.

We are now ready the type for storing $k$-bounded strategies.  The following definition inputs an explicit bound $k$ for the length of the plays, but later on in this section will show that such a bound can be derived from the types by using an analysis of strategies.
\begin{definition}
    For a (partially applied) arena $A$, we define the type $\Strat(A, k)$ of $k$-bounded strategies on $A$ using the following induction:
\begin{itemize}
    \item $k=0$. This type is $1$ if the Environment owns $A$ and has no moves to play. Otherwise, this type is $\emptyset$.
          (Observe that $\emptyset$ is not a valid type in our language, so in the inductive step we are going to handle it explicitly --
          the main idea is that if its used with $+$ then it is going to be ignored,
          and if it is used with $\&$ then the entire type is going to be $\emptyset$.) 
    \item $k + 1$. This type is defined differently, depending on whether Environment or System owns $A$.
    \begin{itemize}
        \item Environment owns $A$. Let $m_1,\ldots,m_n$ be the set of possible first moves. (This set is finite.) The type is defined to be 
        \begin{align*}
            \Strat(m_1^{-1}A, k) \quad \&\  \cdots\  \& \quad  \Strat(m_n^{-1}A, k).
        \end{align*}
        Let us briefly mention how to handle edge cases: If at least one $\Strat(m_n^{-1}A, k)$ is equal to $\varnothing$, then 
        the entire type is equal to $\varnothing$. If there are no valid moves $m_i$, then the entire type is equal to $1$. 
        \item System owns $A$. Now we also consider two options, depending on whether the previous move by Environment was a write move. 
               If it was a write move, then let $m_1, \ldots, m_n$ be the current set of possible read moves that are not the first moves in any library 
               component. (This set is finite). Then our type is defined as follows:
               \begin{align*}
                \Strat(m_1^{-1}A, k) + \cdots + \Strat(m_n^{-1}A, k) + \underbrace{\Strat({\text{eq}}^{-1}A, k)}_{\substack{\textrm{Start a new call to}\\
                \textrm{atoms comparing}\\
                \textrm{library function}}}.
               \end{align*}
               The other case is when the previous move was not write. Then let $m_1, \ldots, m_n$ be the set of available moves that are not read moves. 
               (This set is finite again.) The strategy type is defined as: 
               \begin{align*}
                \Strat(m_1^{-1}A, k) + \ldots + \Strat(m_n^{-1}A, k) + \underbrace{\atoms \otimes \Strat(\atoms^{-1}A, k)}_{\substack{\textrm{Start a new call to}\\
                \textrm{atomic constants}\\
                \textrm{library function}}}.
               \end{align*}
               Again, let us mention the edge cases: If some $\Strat(m^{-1}A, k)$ is empty (including the cases where $m$ is the new library move, i.e. $\atoms$ or $\text{eq}$)
               then we simply skip it. If all $\Strat(m_i^{-1}A, k)$ are empty, then the entire type is empty.
    \end{itemize}
\end{itemize}
\end{definition}
The intuition behind this definition, which was already discussed in the main body of the paper, is as follows. If the current move belongs to Environment, then the strategy 
for System needs to be prepared for every possible move by the environment. However, eventually Environment 
will play only one move, so only one of those paths will materialize. This behaviour corresponds exactly to the connective $\&$, and is the reason why  it is used to make our single-use category symmetric monoidal closed.
When the system is about to move, then it needs to pick one of the possible moves. This corresponds to the behaviour of the connective $+$.

In order to find a linear type that could represent all functions in $X \to Y$, we are going to show 
that for every linear types $X$ and $Y$ there is a bound $k$, such that every single use function $f : X \to Y$
can be represented as a $k$-bounded strategy, i.e. as $\sigma \in \Strat(X \to Y, k)$. Before we do this, 
let us point out some potential problems. 

The main problem is that (as long as $Y \not \approx 1$), there is no universal bound $k$ for strategies in the arena for $X \to Y$
(even though each strategy is bounded by some $k$). The  reason is that System might ask an arbitrary number of irrelevant questions, 
i.e. questions that use the equality test on two constants. Interestingly, the irrelevant questions might appear as a result of compositions of strategies:
\begin{example}[Irrelevant questions from composition]\label{ex:irrelevant-questions-from-composition}
    Consider the  following two functions: 
    \begin{enumerate}
        \item the function of type  $1 \to \atoms \otimes \atoms$ that outputs the pair (Mark, John) for its unique input;
        \item the equality test of type $\atoms \otimes \atoms \to 1 + 1$.
    \end{enumerate}
    Like all single-use functions, the above two functions have natural representations that do not ask irrelevant questions.
    However, if we apply the composition construction from Section~\ref{sec:composition-of-strategies} to these two strategies, we will get a strategy that asks the irrelevant question of whether Mark and John are equal. \exampleend
\end{example}

The idea of irrelevant questions captures the intuition of why strategies might be arbitrarily long, but it is not the only reason for this. Another reason is that System might use constants to start an arbitrary number of parallel calls to the equality test library function, 
never completing any of them. This is why we introduce the following definition:
\begin{definition}[Well-formed strategy]
    We say that a strategy is \emph{well-formed} if it never starts a new call to the equality test library function, by passing a constant as its first argument.
\end{definition}

\noindent
Well-formed strategies are bounded, as shown by the following lemma:
\begin{lemma}\label{lem:well-formed-bounded}
    For every linear types $X$ and $Y$ there is a universal bound $k$, such that all well-formed strategies in $X \to Y$ are bounded by $k$. 
\end{lemma}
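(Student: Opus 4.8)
The plan is to bound, for a well-formed strategy $\sigma$ in the arena for $X \to Y$, the number of library copies (copies of the constant choice arena and of the equality test arena) that its plays can touch; once this number is bounded, the length bound follows at once, since the arenas for the function-free types $X$ and $Y$ are finite and every play uses each move at most once.

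First I would record where register operations come from. Unwinding Definition~\ref{def:arena-for-function-type} and the operations on arenas, the register operations occurring in the arena for $X \to Y$ are exactly: one ``write'' move per $\atoms$-leaf of the (dual) arena of $X$; one ``read'' move per $\atoms$-leaf of the arena of $Y$; one ``write'' move per copy of the constant choice arena; and two ``read'' moves per copy of the equality test arena. All other moves carry no register operation, and only finitely many of those — the moves coming from $X$ and from $Y$ — can occur; write $m_X, m_Y$ for the (finite) numbers of moves in the arenas of $X$ and $Y$, and $n_X, n_Y$ for the numbers of $\atoms$-leaves in the syntax trees of $X$ and $Y$. I would then state a matching lemma following from the immediate read condition: in any play, mapping each ``read'' move to the ``write'' move directly preceding it is a well-defined injection (well-defined since every ``read'' is preceded by a ``write''; injective since a play uses each move at most once, so distinct ``read'' moves sit at distinct positions and hence have distinct predecessors). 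Moreover, as in Section~\ref{sec:some-notation-for-describing-plays}, the \emph{read value} of a ``read'' move is determined by this preceding ``write'': it is the $\atoms$-leaf of $X$ that the ``write'' comes from, or the atom of the constant call it comes from.

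The crucial step is bounding the number of equality test copies used by a play of $\sigma$. The first ``read'' move of such a copy has, by well-formedness, read value equal to some $\atoms$-leaf of $X$, hence — by the matching — is directly preceded by that leaf's unique ``write'' move; since each $\atoms$-leaf of $X$ owns a single ``write'' move, occurring at most once in a play, distinct equality test copies are forced onto distinct leaves, so at most $n_X$ of them occur. I would bound the constant choice copies similarly: each used constant copy's ``write'' move is either the last move of the play or directly followed by a ``read'' move whose read value is that constant, and such a ``read'' move is either a $\atoms$-leaf grant of $Y$ (at most $n_Y$ choices) or the \emph{second} ``read'' of some equality test copy (at most $n_X$, by the previous bound) — it cannot be a \emph{first} ``read'' of an equality test copy, again by well-formedness; distinct constant copies give distinct such ``read'' moves, so at most $n_X + n_Y + 1$ constant copies occur.

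Putting these together, every play of a well-formed strategy in $X\to Y$ touches at most $2n_X + n_Y + 1$ library copies, contributing at most $4(2n_X + n_Y + 1)$ library moves, on top of at most $m_X + m_Y$ non-library moves; hence every such play has length at most $k := m_X + m_Y + 4(2n_X + n_Y + 1)$, which depends only on $X$ and $Y$. I expect the main obstacle to be the crucial step: making precise how ``read value'' is pinned down by the immediately preceding ``write'' and how the use-each-move-at-most-once property of plays forces the injections into the $\atoms$-leaves, plus the minor bookkeeping for plays that end in the middle of a library call or on a ``write'' move.
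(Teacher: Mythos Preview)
Your argument is correct and follows essentially the same route as the paper's proof: use the immediate-read matching to trace every equality-test call back to an $\atoms$-leaf of $X$ (this is exactly where well-formedness is used) and every constant call forward to either an output $\atoms$-leaf or the second slot of an equality test, thereby bounding the number of library calls and hence the play length. The only cosmetic difference is that the paper introduces the tighter quantity $\dim(X)$ (with $\dim(X\&Y)=\dim(X+Y)=\max$ and $\dim(X\otimes Y)=+$) in place of your raw leaf count $n_X$, yielding a sharper but otherwise identical bound.
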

\begin{proof}
    We start by defining the dimension of a type to be the maximal number of $\atoms$ that the type can store at the same time. This is defined inductively, 
    where $\dim(1) = 0$, $\dim(\atoms) = 1$, and the dimension of complex types is computed as follows:
    \[ \begin{tabular}{ccc}
        $\dim(X + Y) = \dim(X \& Y) = \max(\dim(X), \dim(Y))$ & and & $\dim(X \otimes Y) = \dim(X) + \dim(Y)$
    \end{tabular}
    \]
    This is because in case of $X \otimes Y$ both $X$ and $Y$ exist at the same times, and in case of $X + Y$ and $X \& Y$ the values $X$ or $Y$
    only one of them can materialize. 

    The key observation is that in a well-formed strategy, every constant granted by the environment is either compared with an atom from input, 
    or moved to the output. It follows that in a well-formed strategy, the system can ask for at most $\dim(X) + \dim(Y)$ constants
    (because, due to the immediate read rule, each constant has to be used). It is now not hard to see that the lengths of well-formed strategies 
    in $X \to Y$ are bounded by some $k$ that depends only on the types $X$ and $Y$.
\end{proof}

As illustrated by Example~\ref{ex:irrelevant-questions-from-composition}, well-formedness is not preserved under composition of strategies. 
For this reason, we need that every  strategy that is not well-formed  can be turned into a well-formed one in a way that preserves its semantics (later we will also 
show that this can be done in a single-use way). Before we do this, we would like to assume a weaker property of the possibly non-well-formed strategy 
of the input:
\begin{definition}[Read-consistent strategy]
    A strategy is called \emph{read-consistent} if, after requesting an atomic value from the environment, it uses the same move 
    to consume this value in all the branches. This request can be made either through a move in the constant
    library component or a "request" move in an $\atoms$ on the input. Here is a visual representation:
    \picc{read-consistant-ex} 
    (The squiggly lines represent more than one move). Here System requests the atom John from the environment. 
    Environment might respond immediately with write (as in the bottom branch) or it might wait some time 
    (as in the middle or top branch). The read-consistency property requires that the system uses the same move to read 
    the value John in all of those branches (i.e. $M_1 = M_2 = M_3$). 
\end{definition}

The advantage of read-consistency over well-formedness is that it is preserved under compositions of strategies. 
(One can show this directly from \ref{def:shuffling-and-hiding} -- the key idea is to compare the system's response to the immediate write,
with the system's response to every other possible write). Using this observation, we can show that read-consistent strategies are complete:
\begin{lemma}\label{lem:all-read-consistent}
    For every single use function $f : X \to Y$, there is a read-consistent strategy in the arena $X \to Y$ that represents it. 
\end{lemma}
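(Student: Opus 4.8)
The plan is to follow the structure of the proof of \Cref{lem:all-functions-represented} and simply track read-consistency along the way. Concretely, I would show: (i) every prime single-use function is represented by a read-consistent strategy; and (ii) each of the four combinators $\circ$, $+$, $\&$, $\otimes$ used to build strategies in \Cref{lem:all-functions-represented} sends read-consistent strategies to read-consistent strategies. Since read-consistency is a property of a single strategy that does not refer to the represented function, and the representing strategy of a composite function is obtained from those of its constituents by exactly these combinators, an induction on the derivation of $f$ from primes then produces a read-consistent strategy representing $f$. (One could instead try to repair an arbitrary representing strategy into a read-consistent one, but the compositional route is smoother precisely because read-consistency --- unlike well-formedness, cf.\ \Cref{ex:irrelevant-questions-from-composition} --- is closed under composition.)

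For the prime functions one inspects the canonical strategies from the proof of \Cref{lem:all-functions-represented}. The primes that do not touch the library are copycat-type strategies: an atom written by Environment into an $\atoms$-leaf of the input is forwarded to a uniquely determined $\atoms$-leaf of the output, so the read move that consumes a given write is structurally fixed and hence the same in every branch; read-consistency is immediate. For the equality-test and constant-choice primes, after System issues a request the only legal Environment reply is the matching write --- the other library components and input leaves cannot be initiated by Environment --- so no branching occurs between a request and its read at all.

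For the combinators $+$, $\&$ and $\otimes$, read-consistency is inherited component-wise. None of these introduces a new atom request: every request in the glued strategy $\sigma$ sits inside a sub-play that is a play of one of the component strategies, and System's behaviour on that sub-play is dictated entirely by that component. So if Environment writes in reply to such a request --- possibly after moves belonging to the other component, in the $\otimes$ case --- the read chosen by System is the one the component strategy prescribes for the corresponding sub-play, which is branch-independent by the component's read-consistency.

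The real work, and the main obstacle, is composition $\sigma_1;\sigma_2$. Unfolding \Cref{def:shuffling-and-hiding}, a request for an atomic value in $\sigma_1;\sigma_2$ lives in the external arena (a library component or an $\atoms$-leaf of $X$), hence is handled by one of the two strategies, say $\sigma_1$; but between this request and the matching external write the composite play may contain arbitrarily much hidden interaction in the intermediate arena $B$. I would run the argument hinted at in the text: compare the branch where Environment answers the request with an immediate write to any branch where Environment interposes external or hidden activity first. In both branches the restriction of the composite play to the moves of $\sigma_1$ is a play of $\sigma_1$ that contains the same write, so by read-consistency of $\sigma_1$ the read move consuming that atom is the same; and since shuffling-and-hiding copies System's $\sigma_1$-moves verbatim into the composite, that read move appears unchanged in $\sigma_1;\sigma_2$ as well. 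The symmetric argument handles requests routed to $\sigma_2$. This establishes read-consistency of $\sigma_1;\sigma_2$ and closes the induction.
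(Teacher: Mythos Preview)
Your proposal is correct and follows essentially the same approach as the paper: induct on the construction of $f$, check read-consistency for the canonical prime strategies, and verify that each combinator preserves read-consistency, with the key idea for $\circ$ being the comparison of the immediate-write branch against delayed-write branches. Your write-up is in fact more detailed than the paper's own proof, which simply declares the argument ``analogous to Lemma~\ref{lem:all-functions-represented}'' and cites the immediate-write comparison as a hint; the one case you elide (when $\sigma_1$'s read lands in the hidden arena $Y$ and one must chain through $\sigma_2$'s read-consistency) is likewise not spelled out in the paper.
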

\begin{proof}
    The proof is analogous to the proof of Lemma~\ref{lem:all-functions-represented}. As well-formedness is preserved
    under the combinators $\circ$, $\otimes$, $\&$, $+$. (The proof for $\otimes$, $\&$, and $+$ is immediate.)
    Moreover, it is not hard to see that all prime functions can be realized by read-consistent strategies. 
\end{proof}

Next, let us show that every read-consistent strategy can be made well-formed in a way that preserves its semantics
(thus proving completeness of well-formed strategies). 
\begin{lemma}\label{lem:all-well-formed}
    For every read-consistent strategy in the arena for $X \to Y$,
    there is a well-formed strategy in the same arena that represents the same function.
\end{lemma}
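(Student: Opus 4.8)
The plan is to rewrite the given read-consistent strategy $\sigma$ in the arena for $X \to Y$ by surgically modifying exactly the equality calls that witness non-well-formedness, namely the copies of the equality-test arena whose first read move reads a constant supplied from the library rather than a leaf of the input type $X$. Since $\sigma$ is a strategy it is bounded by some $k$, so only finitely many copies of the equality-test arena ever occur in its plays, and only finitely many of them are \emph{bad} (have a constant as first argument in some play). I would then argue by induction on the number of bad copies: it suffices to exhibit a semantics-preserving rewriting that strictly decreases this number, and iterate until none remains, which is precisely well-formedness.

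So fix a bad copy $e$. By the immediate-read condition its first read move is directly preceded by a write move which, since the read value is a constant $c$, belongs to a unique copy $\gamma$ of the constant-choice arena, and $c$ is consumed nowhere else. The second read move of $e$ has read value either another constant $c'$, coming from a unique constant-choice copy $\gamma'$, or a leaf $\ell$ of $X$ labelled by $\atoms$. In the first case the test of $c$ against $c'$ has a statically known verdict, so I would delete all moves of $\gamma$, $\gamma'$ and $e$ from every play and, at the point where System would have begun the call $\gamma$, splice in the behaviour System would have played after Environment answered the test with the correct verdict, discarding the branch for the other verdict; this removes one equality-test copy and two constant-choice copies. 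In the second case I would use the symmetry of equality and locally reorder, in every play, the request/grant of $\ell$, the call $\gamma$, and the four moves of $e$ together with its intermediate Environment move, so that $\ell$ is read first and $c$ second; one checks that the rearranged moves still alternate between the players, still satisfy the immediate-read and write conditions, and that $\ell$ is available at the earlier point because its grant already occurred, inside $e$, in the original play. After this reordering the first argument of $e$ is the input leaf $\ell$, so $e$ is no longer bad, and in neither case is a new bad copy created, so the induction measure drops.

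To see that the represented function is unchanged I would unfold \Cref{def:strategy-as-single-use-function}: for a value-representing strategy $\tau$ for $1 \to X$, the composite play $\tau;\sigma$ still reaches $e$ comparing $c$ with the same value $v$ in the swapped case (and $c=v$ holds exactly when $v=c$), or resolves $e$ to the same verdict in the static case, so the value eventually represented at the root of the syntax tree of $Y$ is the same; a short induction on the composite play lifts this to the whole output. The main obstacle is purely combinatorial: because plays are arbitrary shuffles of the component arenas (\Cref{def:composition-of-arenas}), the moves constituting $\gamma$, $\gamma'$ and $e$ are in general interspersed with unrelated moves, so the deletions and reorderings above must be defined uniformly on the entire set of plays and re-verified against all five clauses of the definition of a strategy; and since $\sigma$ branches, the surgery on different branches must be mutually consistent. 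Read-consistency is precisely what makes this gluing possible — it guarantees that across branches the same move consumes each requested atomic value — and checking that it survives, or is no longer needed, after the surgery is the delicate part; I expect this branch-by-branch bookkeeping, in the same spirit as the preservation argument for \Cref{lem:all-read-consistent} and the phenomenon in \Cref{ex:irrelevant-questions-from-composition}, to be where most of the work lies. Once complete, combined with \Cref{lem:well-formed-bounded} this yields the uniform length bound needed later.
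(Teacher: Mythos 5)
Your overall plan coincides with the paper's: induct on the number of offending constant-initiated equality calls, split on whether the other argument is a constant or an input atom, resolve the constant–constant case statically, and use symmetry of equality for the constant–input case. The constant–constant case matches the paper's argument. But in the constant–input case your specific surgery has a genuine gap. You treat the bad copy $e$ as if its second argument were a single fixed leaf $\ell$, and you propose to ``locally reorder'' every play so that $\ell$ is read first and the constant $c$ second. Two things break here. First, the second argument of $e$ is not determined at the position of the offending request: after System requests the constant, the play branches on Environment's moves, and in different branches the value later compared with $c$ can come from different input leaves, from another constant, or the call may never be completed at all (the paper's proof is explicitly per-branch: ``in each of $T$ there can be at most one such move''). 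So there is no single $\ell$ whose request/grant you can pull forward, and the reordering is not even well-defined as a uniform operation on the strategy tree. Second, even for one branch, you cannot simply advance the request/grant of $\ell$: the grant is an Environment move, and in the shuffled arena Environment chooses when to answer (this is exactly the point of the picture in the read-consistency definition), so after the rearrangement you must specify System's responses to all Environment schedulings at the new intermediate positions — a new strategy, not a permutation of old plays. Your justification that ``$\ell$ is available at the earlier point because its grant already occurred, inside $e$'' is also off: the grant occurs later in the original play and not inside $e$.

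The paper resolves precisely this difficulty by reordering in the opposite direction: it never moves input moves earlier, but instead postpones the library moves. Per branch, the completion of the offending call is first replaced by an artificial ``read and compare with Alice'' fast-track move at the point where the input atom actually arrives, and this fast-track is then expanded into legal moves — initiate a fresh equality test with the input atom as first argument, keep playing as in the original strategy while waiting for Environment's acknowledge, then request the constant, keep playing as in the original strategy while waiting for the write, then complete the test — so that alternation and immediate-read are maintained whatever Environment does in between. This interleaving argument is the crux of the lemma, and it is exactly the part you defer as ``branch-by-branch bookkeeping'' without supplying a mechanism; as stated, your reordering step does not go through, so the proof is incomplete at its central point.
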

\begin{proof}
    Consider a system's request for a constant, in some branch of $\sigma$. 
    We say that this request is \emph{offending} if System's intention (according to read-consistenty) is to use
    the constant to start a new call to the equality test library function.
    Let us now show how to remove an offending request from a strategy in a way that does not change the represented function
    (without introducing any new offending requests). This is enough to prove the lemma,
    because we can repeat this construction for all offending moves. (And thanks to read-consistency, we know that 
    if a strategy contains no offending moves, then it is well-formed.)

    Let ``request Alice'' be the offending move. Here is how we remove it:
    \custompicc{offending-elim-1}{0.4}
    In the above picture,  $T'$ represents  the strategy obtained by inlining the atom Alice in $T$. 
    Specifically we need to remove all moves that are used to complete the 
    equality test library function that was initiated with the atom Alice.
    In order to do this, we identify all atom requests by the system, 
    whose value the system intends to compare with Alice.
    (In each of $T$ there can be at most one such move.)
    These could be of two kinds.
    \begin{enumerate}
        \item 
    The first kind is a request for another 
    constant. We deal with it in the following way:
    We ask what would happen if the Environment immediately returned a write move. 
    In this case, we know that System would respond with the completion of the equality test. 
    Then, we consider what would happen  when Environment immediately returns the result of the comparison -- 
    either $=$ or $\neq$ and replace the entire subtree with the appropriate move. 
    For example, we replace the following subtree with $T_{\neq}$:
    \custompicc{atom-inline}{0.55}
    \item We now consider the second kind of offending move, when System requests an atom from the input.
    In this case, we look at the entire subtree and replace each completion to equality call 
    with a single-move ``read and compare with Alice'', which represents a new fast-track 
    call to the equality test library function, that reads an atom and compares it with a constant
    (we will show how to get rid of this in the next paragraph).
    \custompicc{input-constant-inline}{0.55}

    This leaves us with showing how to get rid of the ``read and compare with Alice'' move. 
    We replace each such move with a usual equality test initialization (a read move). 
    Then, as long as Environment does not play the `acknowledge' move from the equality test, 
    we play as in the original strategy. As soon as the Environment plays `acknowledge', 
    we play ``request Alice'' to the constant library component. Then, until the environment 
    plays ``write Alice'', we continue to play as in the original strategy. Once the environment
    plays ``write Alice'', we play the ``complete equality test'' move, and then we continue 
    to play according to the original strategy. (This is possible, because the responses 
    to ``complete equality test'', are the same as the responses to ``read and compare with Alice''). 
    Here is a picture:
    \picc{elimnate-fast-track}
    \end{enumerate}
    This completes the proof of Lemma~\ref{lem:all-well-formed}. 
\end{proof}
By Lemmas~\ref{lem:all-well-formed}~and~\ref{lem:well-formed-bounded} we know that every function $X \to Y$ 
can be represented by a strategy bounded by $k$ that depends only on $X$ and $Y$. Thanks to this observation 
we are finally ready to define the linear type for function spaces, which is the main subject of Theorem~\ref{thm:single-use-closed}.
\begin{definition}[The function space type] Let $X$ and $Y$ be linear types. 
    We define $X \Rightarrow Y$ to be equal to be the type  $\Strat(X \to Y, k)$ of $k$-bounded strategies, where $k$ is the bound from Lemma~\ref{lem:well-formed-bounded}.
\end{definition}

It remains to define evaluation and Currying as single-use functions. To do this, 
let us introduce first a couple of auxiliary functions.
The following lemma, whose straightforward proof we omit, shows that we can transform a strategy into a value. 
\begin{lemma}\label{lem:evaluate-to-values}
    For every linear $X$, there is a single-use function of type $(1 \Rightarrow X) \to X$ that returns the value represented by a strategy (see  Section~\ref{sec:strategies-in-1-to-x-as-elements-of-sem-x}) if it exists, and otherwise returns some unspecified  value.
\end{lemma}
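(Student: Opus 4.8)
The plan is to build the required single-use function, call it $e_X \colon (1\Rightarrow X)\to X$, by a recursion that mirrors the value-extraction procedure of Section~\ref{sec:strategies-in-1-to-x-as-elements-of-sem-x}. Recall that $1\Rightarrow X=\Strat(1\to X,k)$ for the bound $k$ of Lemma~\ref{lem:well-formed-bounded}, and that $\Strat(1\to X,k)$ is built by following the moves of the arena $\text{(library arena)}\otimes(\text{arena of }X)$ (the dual arena of $1$ being empty). The point is that, when Environment drives the play, it simply walks down the syntax tree of $X$, so the shape of $\Strat(1\to X,k)$ reflects this walk up to the ``noise'' of library calls. I would therefore carry out a simultaneous induction on $k$ and on the pending part of $X$, maintaining the invariant that at a position whose still-to-be-granted output is a subtype $Y$ of $X$, the type of $k$-bounded strategies decomposes, after stripping a bounded run of library moves, according to the outermost constructor of $Y$.

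The cases are then routine. For $Y=1$ there is nothing to produce. For $Y=\atoms$, with input $1$ the only source of atoms is a constant call, so the relevant part of the strategy type carries an $\atoms$ factor holding the chosen constant, which by the immediate-read condition is exactly the granted value; we read it off with the projections of $\otimes$. For $Y=Y_1+Y_2$ it is System that chooses the disjunct, which appears as a genuine $+$ in $\Strat$, so we case-split and recurse with $e_{Y_1},e_{Y_2}$ combined by the $+$-combinator. For $Y=Y_1\& Y_2$ it is Environment that chooses, and since a System strategy is closed under all Environment moves both branches are present as a $\&$ in $\Strat$; we apply $e_{Y_1}\& e_{Y_2}$. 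For $Y=Y_1\otimes Y_2$ Environment must explore both components in some interleaving, so $\Strat$ is a $\&$-tree over all interleavings; following the branch that plays out $Y_1$ (and all library calls it triggers) completely before touching $Y_2$ reduces the type, via first/second projections, to one of the form $\Strat(A_1,k_1)\otimes\Strat(A_2,k_2)$, to which we apply $e_{Y_1}\otimes e_{Y_2}$.

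Two kinds of ``noise'' are absorbed along the way, and both stay single-use. A value-representing strategy may open an irrelevant equality call between two constants $a,b$; this appears as $\atoms\otimes\atoms\otimes(\Strat_{=}\&\Strat_{\neq})$, and we resolve it with an honest equality test $\atoms\otimes\atoms\to 1+1$, followed by distributivity of $+$ over $\otimes$ (Table~\ref{fig:prime-morphisms-without-with}) and a projection, which selects the right continuation. A strategy may also perform surplus constant calls whose atoms are never used, which show up as spurious $\atoms$ factors or, together with System's choice to make such a call, as extra $+$-summands that we merge with the co-diagonal. Since every ingredient used --- projections, co-diagonal, distributivities, the equality test, and the $+/\&/\otimes$ combinators --- is a single-use primitive of Tables~\ref{fig:prime-morphisms-without-with} and~\ref{fig:prime-morphisms-with-with} or a combinator, the composite $e_X$ is single-use; it is moreover total, so on strategies that do not represent a value its output is left unspecified, as allowed. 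A straightforward induction confirms that on value-representing strategies $e_X$ returns exactly the value of Section~\ref{sec:strategies-in-1-to-x-as-elements-of-sem-x}.

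The main obstacle is bookkeeping rather than anything conceptual: making the induction invariant precise enough that the claimed decomposition of $\Strat(1\to X,k)$ really holds at each node of the syntax tree of $X$ --- including the $\varnothing$-edge cases of the $\Strat$ definition (empty summands dropped, an empty $\&$-factor collapsing the whole type) --- and, in the $\otimes$ case, checking that the fully sequential interleaving is always an available branch (it is, because a maximal play in the arena of an output linear type ends with a System move, so the concatenation respects alternation) and that collapsing the other branches discards no data needed for either component. This is exactly the verification the paper elides as ``straightforward'', and it is where all the care lies.
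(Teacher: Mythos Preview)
The paper omits this proof entirely, writing only ``whose straightforward proof we omit''. Your sketch is along the natural lines and already says much more than the paper does; in particular, the case analysis on the outermost constructor of $X$ and the handling of library noise (resolving spurious equality calls with an honest equality test, discarding spurious constant calls with projections and the co-diagonal) are exactly right.

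One point in the $\otimes$ case deserves tightening. Following the sequential interleaving that exhausts $Y_1$ before touching $Y_2$ does \emph{not} reduce the type to a clean $\Strat(A_1,k_1)\otimes\Strat(A_2,k_2)$: as you walk through the $Y_1$-phase you still meet System's $+$-choices (library calls and $+$-nodes inside $Y_1$), and different branches consume different numbers of moves, so the residual for $Y_2$ lives in a \emph{sum} of types $\Strat(-,k')$ with varying $k'$, not a single one. The fix is to do the $Y_1$-extraction and the residual-for-$Y_2$ computation in one pass, producing an element of a finite coproduct $\sum_j \sem{Y_1}\otimes\Strat(-,k'_j)$, and then apply the appropriate recursively built extractor on each summand before merging with the co-diagonal. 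This is still single-use and still, as you correctly say, bookkeeping rather than a new idea --- but your phrasing ``apply $e_{Y_1}\otimes e_{Y_2}$'' suggests a factorisation that is not literally available.
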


The unspecified value in the above lemma is used to ensure that the function is total, since our single-use functions are total. We can simply fix some value $x \in \sem X$, which is constructed using constants, and return this value for all strategies that do not represent any value. We will use this approach to representing partial functions later in this section as well.

Next, we show that we can compose strategies in a single-use way. 
\begin{lemma}\label{lem:compose-single-use-linear-type}
    Let $X \Rightarrow Y$ and $Y \Rightarrow Z$ be linear types, then there is a single use function:
    \[\text{compose} : (X \Rightarrow Y) \otimes (Y \Rightarrow Z) \to (X \Rightarrow Z)\]
    such that if $\sigma_1$ and $\sigma_2$ are strategies that represent functions $f_1$ and $f_2$ respectively,
    then $\text{compose}(\sigma_1, \sigma_2)$ represents the composition of $f_1$ and $f_2$.
\end{lemma}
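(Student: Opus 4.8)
The plan is to implement $\text{compose}$ by directly simulating, at the level of strategy representations, the shuffling-and-hiding construction of \Cref{sec:composition-of-strategies}. Unfolding the definitions, $X\Rightarrow Y = \Strat(X\to Y,k_1)$, $Y\Rightarrow Z=\Strat(Y\to Z,k_2)$ and $X\Rightarrow Z=\Strat(X\to Z,k)$, where $k_1,k_2,k$ are the well-formed bounds of \Cref{lem:well-formed-bounded} for the three function types. The target map $\Strat(X\to Y,k_1)\otimes\Strat(Y\to Z,k_2)\to\Strat(X\to Z,k)$ will be defined by induction on the bound, carrying a \emph{configuration}: residual $\Strat$-subterms of the two inputs, plus a bounded control state recording whose turn it is in the composed arena for $X\to Z$, whether an internal $Y$-dialogue is in progress and on which side, and which atom, if any, is currently in transit. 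The two library copies occurring in $\sigma_1$ and $\sigma_2$ are merged using the arena isomorphism $\text{library arena}\equiv\text{library arena}\otimes\text{library arena}$ of \Cref{eq:library-arena-isomorphism}, which at the level of $\Strat$-types is just a relabelling of $+$-summands and is single-use (built from the associativity and commutativity primes).

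The core of the construction is the case split on the control state. When the configuration calls for an Environment move in the composed arena, the target type is a $\&$ over Environment's finitely many legal moves, and for each such move the continuation is obtained by projecting the relevant input representation into the matching $\&$-summand with the first/second-projection primes of \Cref{fig:prime-morphisms-with-with}; this is exactly where the single-use discipline of $\&$ is needed, since the sibling summands become unavailable --- harmlessly, because in one composed play only one Environment branch is ever realised. When the configuration calls for a System move, the target is a $+$ (or, for a fresh constant call, a summand $\atoms\otimes\Strat(\cdot,k)$), and we read off System's choice by case analysis and distributivity on the $+$-structure of the relevant input, then recurse into the chosen summand. Internal $Y$-moves are passed from one side to the other and consumed without emitting anything in the composed play; the register operations, together with the immediate-read condition, guarantee that each atom crossing the $Y$-cut or forwarded out of a constant call is used exactly once.

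The one genuine subtlety is that the composed strategy $\sigma_1;\sigma_2$ need not be well-formed --- it may make irrelevant constant comparisons, as in \Cref{ex:irrelevant-questions-from-composition} --- and so a priori only lies in $\Strat(X\to Z,k_1+k_2)$ rather than in the smaller type $\Strat(X\to Z,k)$ used for $X\Rightarrow Z$. We will therefore not compose ``naively'': instead we interleave the offending-move elimination from the proof of \Cref{lem:all-well-formed} with the composition itself, so that every constant that would start a fresh equality call is inlined as soon as it is produced, and the construction stays within $\Strat(X\to Z,k)$ throughout. (On inputs that do not represent functions we follow \Cref{lem:evaluate-to-values} and return an arbitrary fixed value of $X\Rightarrow Z$, purely to keep $\text{compose}$ total.)

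Correctness then follows by an induction on the bound showing that the type-level map computes $\sigma_1;\sigma_2$ up to semantics-preserving well-forming, and invoking \Cref{claim:composition-of-strategies-and-functions}: if $\sigma_1$ and $\sigma_2$ represent $f_1$ and $f_2$, their images compose to a strategy representing $f_1;f_2$, which is what $\text{compose}(\sigma_1,\sigma_2)$ returns. The step I expect to be the main obstacle is making the simulation \emph{literally} single-use while carrying out the on-the-fly well-forming: the configuration must own each pending atom exactly once and discharge its immediate-read obligation precisely, and the induction must be organised so that it never needs more than one Environment branch of either input at a time --- both of which ultimately rest on the fact that the inputs are genuine strategies (closed under Environment extensions) and not arbitrary inhabitants of the $\Strat$-types.
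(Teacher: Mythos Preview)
Your proposal is essentially correct and follows the same overall idea as the paper --- simulate shuffling-and-hiding at the level of the $\Strat$-types, and deal with the loss of well-formedness via the offending-move elimination of \Cref{lem:all-well-formed} --- but you organise the argument differently, and the paper's decomposition is worth knowing.

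The paper does \emph{not} interleave well-forming with composition. Instead it factors $\text{compose}$ into two independent single-use maps:
\begin{enumerate}
\item a ``raw'' composition $\Strat(X\to Y,k_1)\otimes\Strat(Y\to Z,k_2)\to\Strat(X\to Z,k_1+k_2)$, defined by induction on $k_1+k_2$, which simply simulates the two strategies and passes $Y$-moves back and forth; single-useness here is justified by the Covering Lemma of \cite{abramsky2013semantics} (hiding is injective on the shuffled plays), which you allude to but do not name;
\item a separate ``shrinking'' map $\Strat(X\to Z,\ell)\to\Strat(X\to Z,k)$ for any $\ell$, implementing the offending-request elimination of \Cref{lem:all-well-formed}; single-useness here rests on the distributivity primes $\atoms\otimes(T_1\&\cdots\&T_n)\to(\atoms\otimes T_1)\&\cdots\&(\atoms\otimes T_n)$ and $\atoms\otimes(T_1+\cdots+T_n)\to(\atoms\otimes T_1)+\cdots+(\atoms\otimes T_n)$, which let a single inlined atom be pushed to every child of a strategy node.
\end{enumerate}
This separation buys modularity: each step has a clean invariant and an obvious induction parameter ($k_1+k_2$ for the first, $\ell$ for the second). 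Your interleaved version has to run its induction on the \emph{sum} of the residual bounds of the two inputs anyway --- not on the target bound $k$, since an arbitrarily long hidden $Y$-dialogue may precede each visible move --- so you do not actually avoid the larger intermediate type; you only avoid \emph{naming} it. The on-the-fly inlining also forces you to carry pending atoms through the configuration and discharge them against both $\&$- and $+$-branching simultaneously, which is exactly the bookkeeping the paper isolates into step~(2). Your remark that totality on non-representing inputs requires a default output is unnecessary: both steps are total on all of $\Strat$, and the semantic claim in the lemma is already conditional.
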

\begin{proof}
    The proof has two steps:
    \begin{itemize}
        \item First, we show that for every $k, l$ there is a function
              $\Strat(X \to Y, k) \otimes \Strat(Y \to Z, k) \to \Strat(X \to Z, k + l)$, 
              that computes the composition of the strategies (according to the \ref{def:shuffling-and-hiding}). 
              This function can be implemented (by induction on $k + l$) by simulating the two strategies 
              and passing around the moves in $Y$. The construction will turn out to be single-use 
              thanks to that the hiding operation is an injection on 
              the shuffled strings see \cite[Covering Lemma]{abramsky2013semantics}.
        \item Next we show that for every $l$ there is a single-use function 
              $\Strat(X \to Y, l) \to \Strat(X \to Y, k)$, where $k$ is the 
              constant from Lemma~\ref{lem:well-formed-bounded}.
              The function works by removing the offending requests from the 
              strategy, as described in the proof of Lemma~\ref{lem:all-well-formed}. The key observation that makes this construction single-use is that we can use the 
              distributivities of $+$ and $\&$ to push atoms from the node 
              of a tree to all of its children:
              \[
                \begin{tabular}{ccc}
                    $ \atoms \otimes (T_1 \& \ldots \& T_n) \to \atoms \otimes T_1 \& \ldots \& \atoms \otimes T_n$ & and &
                    $ \atoms \otimes (T_1 +  \ldots + T_n) \to \atoms \otimes T_1 + \ldots + \atoms \otimes T_n$
                \end{tabular}
              \]
    \end{itemize}
\end{proof}

The final helper function transforms values to strategies. This is done in the following lemma, whose straightforward proof we omit.
\begin{lemma}\label{lem:to-strat}
    There is a single-use function of type $X \to (1 \Rightarrow X)$, 
    that returns a strategy whose semantics is equal to its input.
 \end{lemma}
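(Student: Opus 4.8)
The plan is to construct the function $X \to (1 \Rightarrow X)$ by induction on the linear type $X$, following the recursive shape of the arena for $1 \to X$. Recall that, by Definition~\ref{def:arena-for-function-type}, this arena is $(\text{library arena}) \otimes (\text{dual arena of } 1) \otimes (\text{arena of } X)$, and since the arena of $1$ (hence its dual) is empty, it is just $(\text{library arena}) \otimes (\text{arena of } X)$; the type $1 \Rightarrow X = \Strat(1 \to X, k)$ is obtained by unfolding this arena as in the definition of $\Strat$. For each shape of $X$ we exhibit a single-use program that, given a value $x \in \sem X$, outputs the strategy representing $x$ in the sense of Section~\ref{sec:strategies-in-1-to-x-as-elements-of-sem-x}.

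For the base cases, when $X = 1$ there is nothing to produce: unfolding shows $1 \Rightarrow 1$ is the type $1$, so the function is the identity. When $X = \atoms$, unfolding $\Strat(1 \to \atoms, k)$ yields a type of the shape $\atoms \otimes (1 + \mathit{junk})$, where the leading $\atoms$ records which atom System chooses when it opens a constant-choice call of the library (the only way for System to obtain a value to read and then forward to the output $\atoms$), and $\mathit{junk}$ collects the components that open further, irrelevant library calls. Hence $\mathrm{toStrat}_{\atoms}$ places its input atom $a$ into that $\atoms$ factor and fills the rest with the (constant-built) tail ``grant the output now''; the atom $a$ is used exactly once.

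For the inductive cases we combine the induction hypotheses $\mathrm{toStrat}_Y$ and $\mathrm{toStrat}_Z$ using the combinators, after observing that peeling off the first one or two moves of the arena of $Y \star Z$ (Definition~\ref{def:composition-of-arenas}) exposes, inside $\Strat(1 \to (Y \star Z), k)$, a copy of $\Strat(1 \to Y, \cdot)$ and a copy of $\Strat(1 \to Z, \cdot)$, alongside some components opening irrelevant library calls; a routine single-use ``weakening'' function $\Strat(A, l) \to \Strat(A, k)$ for $l \le k$ (built by induction on $l$ from the co-projections and the $+,\&$-combinators) lets us ignore the bookkeeping of bounds. Thus for $Y + Z$ we case-split on the input with the $+$-combinator, apply $\mathrm{toStrat}_Y$ or $\mathrm{toStrat}_Z$, and inject via the corresponding co-projection into $1 \Rightarrow (Y+Z)$; for $Y \& Z$ we precompose with the prime function $\mathrm{diag} : Y \& Z \to (Y \& Z) \& (Y \& Z)$, then apply $\pi_1 ; \mathrm{toStrat}_Y$ and $\pi_2 ; \mathrm{toStrat}_Z$ combined by the $\&$-combinator, and embed; and for $Y \otimes Z$ we use the $\otimes$-combinator to apply $\mathrm{toStrat}_Y$ and $\mathrm{toStrat}_Z$ to the two halves of the pair simultaneously, landing in $(1 \Rightarrow Y) \otimes (1 \Rightarrow Z)$.

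The $\otimes$ case is the main obstacle: a strategy in the arena for $1 \to (Y \otimes Z)$ is an \emph{interleaving} of a strategy for $Y$ and one for $Z$, and $\Strat$ of a tensor arena admits no product decomposition, so there is no immediate embedding. We resolve this with an auxiliary single-use ``shuffle'' function $(1 \Rightarrow Y) \otimes (1 \Rightarrow Z) \to (1 \Rightarrow (Y \otimes Z))$, built by induction on the length bound in the same spirit as the composition function of Lemma~\ref{lem:compose-single-use-linear-type}: it simulates both input strategies, dispatching each Environment move to whichever of the two output components it queries, and merges the two copies of the library arena using the isomorphism~\eqref{eq:library-arena-isomorphism}; single-use-ness holds because de-interleaving a shuffle is injective, exactly as in that proof. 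Composing the $\otimes$-combinator output with this shuffle function closes the case, and the bound on the resulting strategy is again controlled by Lemma~\ref{lem:well-formed-bounded}, so the output lies in $1 \Rightarrow (Y \otimes Z)$ as required.
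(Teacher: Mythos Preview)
The paper omits this proof as ``straightforward'', so there is nothing to compare against directly; your inductive construction on the linear type $X$ is the natural approach and is correct. The handling of the base case $\atoms$ via the constant-choice library and of the $\&$ case via the diagonal are exactly right, and you correctly identify the $\otimes$ case as the only nontrivial one, where the shuffle function $(1 \Rightarrow Y) \otimes (1 \Rightarrow Z) \to (1 \Rightarrow (Y \otimes Z))$ is built by alternating the diagonal $X \to X \& X$ with the distributivity $X \otimes (A \& B) \to (X \otimes A) \& (X \otimes B)$ at each level.

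Two minor remarks. First, your justification ``de-interleaving a shuffle is injective'' is not quite the right reason for single-use-ness here: unlike in Lemma~\ref{lem:compose-single-use-linear-type}, no hiding occurs, and the construction is single-use simply because it is assembled from the prime functions (diagonal, distributivities, co-projections) and combinators, none of which copy atoms. Second, the bound bookkeeping deserves one more sentence: the shuffled strategy is well-formed (neither component starts an equality call with a constant), hence by Lemma~\ref{lem:well-formed-bounded} its plays are bounded by the $k$ defining $1 \Rightarrow (Y \otimes Z)$, so the weakening embedding you invoke does land in the right type.
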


We are now ready to define Currying and evaluation, and thus  complete the proof of Theorem~\ref{thm:single-use-closed}.  We begin with the evaluation function, which is defined as follows: 
    \[
    \begin{tikzcd}
     X \otimes (X \Rightarrow Y) 
        \ar[d, "{\text{function from Lemma~\ref{lem:to-strat} on first coordinate}}"]
    \\
        (1 \Rightarrow X) \otimes (X \Rightarrow Y)
        \ar[d, "{\text{composing function from Lemma~\ref{lem:compose-single-use-linear-type}}}"]
        \\ 
        1 \Rightarrow Y   
        \ar[d, "{\text{function from Lemma~\ref{lem:evaluate-to-values}}}"]
        \\ 
        Y    
    \end{tikzcd}
    \]
    Now we consider the Currying of  a single-use function
\begin{align*}
f : Z \otimes X \to Y.
\end{align*}
By Lemma~\ref{lem:all-well-formed} we know that this single-use function is represented by some element $\hat f$ of the  set represented by the type $Z \otimes (X \Rightarrow Y)$. We use this element to define Currying of $f$ as follows:
\[
    \begin{tikzcd}
     Z  
        \ar[d, "{\text{extend the input with $\hat f$}}"]
    \\
    Z \otimes (Z \otimes X \Rightarrow Y)
        \ar[d, "{\text{apply function from Lemma~\ref{lem:to-strat} to first argument}}"]
        \\ 
        (1 \Rightarrow Z) \otimes (Z \otimes X \Rightarrow Y)   
        \ar[d, "{\text{lift first argument to function that copies argument from $X$}}"]
        \\ 
        (X \Rightarrow X \otimes Z)  \otimes (Z \otimes X \Rightarrow Y)
        \ar[d, "{\text{composing function from Lemma~\ref{lem:compose-single-use-linear-type}}}"]
        \\ 
        X \Rightarrow Y 
    \end{tikzcd}
    \]
These functions are single-use, as compositions of single-use functions. They also satisfy the required properties in a symmetric monoidal closed category, which can easily be verified using the properties proved in the lemmas that were used to define the functions. This completes the proof of  Theorem~\ref{thm:single-use-closed}.

\section{A quotient construction}
\label{sec:quotient-category}
A drawback of Theorem~\ref{thm:single-use-closed} is that the function space $\funspace V W$ can contain different representations of the same function; this will mean that Currying is not unique. To overcome this issue, we use a simple quotient construction. 
Define a \emph{partial equivalence relation} to be a relation that is symmetric and transitive, but not necessarily reflexive.  This is the same as (complete) equivalence relation on some subset. We will use a partial equivalence on the function space $X \Rightarrow Y$ to: (1)
 remove objects that do not represent any morphism; (2) identify two objects if they represent the same morphism. After such a quotient, the function space will have unique representations for functions.

\begin{definition}
    The \emph{quotiented single-use category} is: 
    \begin{itemize}
    \item Objects are pairs (linear type $X$, equivariant partial equivalence relation on $\sem X$);
    \item Morphisms between objects $(X,\sim_X)$ and $(Y,\sim_Y)$ are single-use functions from $\sem X$ to $\sem Y$ such that the domain of the function is contained in the domain of $\sim_X$, and equivalent inputs are mapped to equivalent outputs.
    \end{itemize}
\end{definition}

The quotiented single-use category is also equipped with a tensor product $\otimes$ on its objects.
\begin{theorem}
    The quotiented single-use category, equipped with the tensor product $\otimes$, is a monoidal closed category, i.e.~it satisfies the conclusions of Theorem~\ref{thm:single-use-closed}, but, furthermore, the morphism $h$ is unique.
\end{theorem}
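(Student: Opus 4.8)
The plan is to leverage Theorem~\ref{thm:single-use-closed} essentially as a black box: the quotient does nothing to its function space except collapse the many representations of one function to a single point, and that collapse is exactly what upgrades the \emph{existence} of $h$ to \emph{uniqueness}. Throughout I would read the Hom-sets of the quotiented category modulo the evident equivalence, i.e.\ two single-use functions $f,g:\sem X\to\sem Y$ count as the same morphism $(X,\sim_X)\to(Y,\sim_Y)$ as soon as $x\sim_X x'$ implies $f(x)\sim_Y g(x')$; this identification is what makes a literally unique $h$ possible in the first place. The category axioms then follow from those of the single-use category: the composite of two PER-compatible single-use functions is PER-compatible, and its behaviour depends only on the source input modulo $\sim$ and the output modulo $\sim$, so composition descends; associativity and the unit laws are inherited. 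For the symmetric monoidal structure I would set $(X,\sim_X)\otimes(Y,\sim_Y)=(X\otimes Y,\ \sim_X\times\sim_Y)$ with $(x,y)\sim(x',y')$ iff $x\sim_X x'$ and $y\sim_Y y'$; this is again an equivariant PER, and the structural isomorphisms of $\otimes$ from the single-use category are PER-compatible, hence descend, with monoidal unit $(1,{=})$.

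For the closed structure, given objects $(V,\sim_V)$ and $(W,\sim_W)$ I would take the internal hom to have underlying linear type $\funspace V W$ from Theorem~\ref{thm:single-use-closed}, equipped with the partial equivalence relation $\sim$ defined by: $\varphi\sim\psi$ iff for every $v\sim_V v'$ the values $eval(\varphi,v)$ and $eval(\psi,v')$ are defined and $\sim_W$-related, where $eval$ is the morphism of Theorem~\ref{thm:single-use-closed}. Symmetry is immediate; transitivity uses transitivity of $\sim_V$ (to get $v\sim_V v$) and of $\sim_W$; equivariance holds because $eval$, $\sim_V$ and $\sim_W$ are equivariant. By construction a representation $\varphi$ lies in the domain of $\sim$ precisely when $v\mapsto eval(\varphi,v)$ restricts, on $\mathrm{dom}(\sim_V)$, to a morphism $(V,\sim_V)\to(W,\sim_W)$, and, applying Theorem~\ref{thm:single-use-closed} with $X=1$, every such morphism is represented by at least one point of that domain. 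The evaluation morphism of the quotiented category is $eval$ itself, and it is PER-compatible directly from the definition of $\sim$: if $\varphi\sim\psi$ and $v\sim_V v'$ then $eval(\varphi,v)\sim_W eval(\psi,v')$.

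It remains to establish the universal property together with uniqueness. Given a morphism $f:(X,\sim_X)\otimes(V,\sim_V)\to(W,\sim_W)$, take the single-use function $h:X\to\funspace V W$ furnished by Theorem~\ref{thm:single-use-closed}, so that $eval(h(x),v)=f(x,v)$ for all $x,v$. If $x\sim_X x'$ then, for $v\sim_V v'$, PER-compatibility of $f$ gives $eval(h(x),v)=f(x,v)\sim_W f(x',v')=eval(h(x'),v')$, i.e.\ $h(x)\sim h(x')$; hence $h$ is a morphism $(X,\sim_X)\to(\funspace V W,\sim)$, and the triangle of Theorem~\ref{thm:single-use-closed} already commutes at the level of underlying functions, hence a fortiori in the quotiented category. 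For uniqueness, suppose $h'$ is another morphism making the triangle commute; then for each $x\in\mathrm{dom}(\sim_X)$ and each $v\sim_V v'$, commutativity and PER-compatibility of $f$ give $eval(h'(x),v)\sim_W f(x,v)\sim_W f(x,v')=eval(h(x),v')$, which is precisely the statement $h'(x)\sim h(x)$; thus $h'$ and $h$ are the same morphism of the quotiented category.

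The main obstacle is not conceptual but one of conventions: I would need to pin down exactly which (possibly partial) single-use functions count as morphisms in the quotiented category and how the domain of $\sim_X$ constrains them, since $eval$ from Theorem~\ref{thm:single-use-closed} is itself only a partial function, and then re-check that $(\funspace V W,\sim)$, the map $eval$, and the Curried map $h$ all meet those constraints. Once this bookkeeping is fixed, the verifications above reduce to routine transitivity arguments for partial equivalence relations resting on properties already proved in Theorem~\ref{thm:single-use-closed}.
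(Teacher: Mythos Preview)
The paper states this theorem but supplies no proof, so there is nothing to compare against; your write-up is in fact more detailed than what the paper provides. Your approach is the standard PER-model construction and is correct: define $\otimes$ on PERs componentwise, put on $\funspace V W$ the logical-relation PER $\varphi\sim\psi\iff\forall\,v\sim_V v'.\ eval(\varphi,v)\sim_W eval(\psi,v')$, and read existence off Theorem~\ref{thm:single-use-closed} while uniqueness is forced by the very definition of $\sim$.

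Two remarks. First, you are right to insist that Hom-sets be taken modulo the relation ``$x\sim_X x'\Rightarrow f(x)\sim_Y g(x')$''; the paper's Definition of the quotiented category literally takes morphisms to be single-use functions (not equivalence classes), under which reading uniqueness of $h$ could not hold, so your explicit identification is not optional bookkeeping but the point that makes the theorem true. Second, in your uniqueness paragraph you show $h'(x)\sim h(x)$ for $x\in\mathrm{dom}(\sim_X)$; to conclude that $h$ and $h'$ are the same morphism you still need $h(x)\sim h'(x')$ for all $x\sim_X x'$, which follows by one more use of transitivity together with the fact that $h$ (or $h'$) is itself PER-compatible. This is implicit in what you wrote but worth making explicit. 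The closing caveat about partiality of $eval$ and the ``domain of the function'' clause in the paper's definition is well taken; once you fix a convention (total single-use functions with junk values outside $\mathrm{dom}(\sim_X)$ is the cleanest), all the verifications go through exactly as you outline.
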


\section{Beyond equality}
\label{sec:beyond-equality-appendix}

As mentioned in the main body of the paper, the proof of Theorem~\ref{thm:single-use-closed} extends without any difficulty to the case of relational structures. We only prove the decidability result from Theorem~\ref{thm:first-order-decidable}.

\begin{proof}[Proof of Theorem~\ref{thm:first-order-decidable}]
    When defining the category of single-use functions, we viewed it as a restriction of a larger category, in the morphisms were equivariant functions.  We begin by describing the generalization of this larger category to the case where the parameter $\atoms$ is some relational structure. The idea is to use first-order definable functions as the general analogue of equivariant functions. 

\begin{definition}[Category of first-order definable functions]
    Let $\atoms$ be a relational structure. The \emph{category of first-order definable functions over $\atoms$} is defined as follows.
    \begin{itemize}
        \item The objects are polynomial sets over $\atoms$, i.e.~finite disjoint unions of powers of $\atoms$.
        \item The morphisms are the  first-order definable functions. Here, a function 
        \begin{align*}
            f : \atoms^{n_1} + \cdots + \atoms^{n_k} \to \atoms^{m_1} + \cdots + \atoms^{m_\ell}
            \end{align*}
            is called  \emph{first-order definable} if for every $i \in \set{1,\ldots,k}$ and  $j \in \set{1,\ldots,\ell}$ the set 
            \begin{align*}
            \setbuild{ (x,y) \in \atoms^{n_i} \times \atoms^{m_j}}{f(\text{$x$ in the $i$-th component}) = \text{$y$ in the $j$-th component}}
            \end{align*}
            can be defined by a first-order formula over the vocabulary of $\atoms$.
    \end{itemize}
\end{definition}
The above is easily seen to be a category, since first-order definable functions  can be composed. 
If the underlying structure $\atoms$ has a decidable first-order theory, then the category described above is reasonably tame, in particular one can decide if two morphisms are equal.

The single-use category admits a faithful functor to the above category. In particular, this implies that the problem of deciding if two morphisms are equal is decidable in the single-use category. 
\end{proof}

\section{Two-way automata}

\begin{proof}[Proof of Lemma~\ref{lem:traced-finite-iteration}]
    We only need the weaker assumption, which is that the function $f$ is finitely supported. Consider some set of atoms that supports $f$. This same set of atoms will also support all functions $f^{(k)}$, and it will also support their domains. In particular, the domains of these functions will form a decreasing sequence of subsets of $\sem{A + X}$, with all subsets having the same support. Since an orbit-finite set can have finitely many subsets with a given support, this sequence must terminate in finitely many steps.
\end{proof} 

\begin{proof}[Proof of Theorem~\ref{thm:two-way-automata}]
    The transition function in the automaton tells us how it behaves for a single input letter. We now extend this function to input strings.    For every input string $w$, we will define  a single-use function 
    \begin{align*}
       \delta_w :  Q+Q \to Q + Q + 1
    \end{align*}
    that describes the behaviour of the automaton on $w$. 
    
    This is done by induction on the length of the input string. Suppose that we have already defined the functions for  input strings $w_1$ and $w_2$.
    We want to compose them into a function for the string $w_1 w_2$. This function will need to account for multiple crossings of the head on the boundary between $w_1$ and $w_2$, as described in the following picture: 
    \mypic{13}
    Taking care of these crossings is exactly what is accomplished by the trace. Indeed, consider the function 
        \begin{align*}
            f : \myunderbrace{Q+Q}{$A$} + \myunderbrace{Q+Q}{$X$} \to \myunderbrace{Q+Q + 1}{$B$} + \myunderbrace{Q+Q}{$X$},
        \end{align*}
        where $A$ represents entering $w_1 w_2$ from either the left or right, $B$ represents exiting $w_1 w_2$ from either the left or right or accepting,  and $X$ represents the boundary between $w_1$ and $w_2$. The trace of $f$ is exactly the function that we need, in particular it will be undefined for inputs that lead to an infinite run.

        The definition of $\delta_w$ described above describes a function 
        \begin{align*}
        \text{input strings} \to \text{single-use functions of type }Q+Q \to Q + Q + 1.
        \end{align*}
        This function is a monoid homomorphism, for a suitably defined  monoid structure on the image. We use the name \emph{transition monoid} for the output monoid. We can view the transition monoid  as a linear type, namely the function space 
        \begin{align*}
        \funspace {Q+Q} {Q + Q + 1}.
        \end{align*}

        Like any linear type, the transition monoid represents an orbit-finite set. This set is equipped with a monoid operation. Although this monoid operation is not single-use (because tracing is not single-use, as an operation on function spaces), it is still an equivariant operation. Therefore, we have an orbit-finite monoid. We can now use a standard fix-point algorithm on orbit-finite sets~\cite{bojanczyk_slightly2018} to compute a representation of the sub-monoid that is generated by the images of the single letters. Finally, we can check if this sub-monoid contains an accepting element, i.e.~a transition function that maps the initial configuration (initial state at the left of the word) to an accepting state. 
\end{proof}

\section{Proof of Theorem~\ref{thm:orbit-finite-vector-space-closed}}
 We apply the standard proof that the category of vector spaces (without atoms and orbit-finiteness) is symmetric monoidal closed. We then observe that: (1) both evaluation and Currying are equivariant; (2) the resulting space is orbit-finitely spanned. 
    
    For the first observation (1), we simply note that the construction of the function space, as well as  the evaluation and Currying morphisms, are constructed using the language of set theory, and therefore they will necessarily be equivariant~\cite[Equivariance Principle]{bojanczyk_slightly2018}.

    Let us now explain the second observation (2), about the function space being orbit-finitely spanned. This uses a non-trivial result from~\cite{bojanczykKM21OrbitFiniteVector} which says that orbit-finitely spanned spaces are closed under duals. 
    The space $\funspace V W$ is defined to be the space of finitely supported linear maps from $V$ to $W$, with the natural vector space structure. This can be seen as a subspace of a larger space, call it $\funspacenonfs V W$, which contains all linear maps, not necessarily finitely supported. A standard result in linear algebra is that $\funspacenonfs V W$ is linearly isomorphic to $ \funspacenonfs {V \otimes W} 1$, where $1$ is the field. This isomorphism is equivariant, and it preserves the property of being finitely supported. Therefore, this gives us an equivariant linear isomorphism between $\funspace V W$ is isomorphic and $ \funspace {V \otimes W} 1$. The latter space is the dual of the orbit-finitely spanned space $V \otimes W$, and therefore it is orbit-finitely spanned by~\cite[Corollary VI.5]{bojanczykKM21OrbitFiniteVector}.

    We would like to remark that a similar result as Theorem~\ref{thm:orbit-finite-vector-space-closed} was observed in~\cite[Theorem 3.8]{przybyłek2024note}, but using the smaller category of vector spaces that admit an orbit-finite basis.

\end{document}